\tikzset{font={\fontsize{9pt}{12}\selectfont}}
\tikzset{>=latex}
\DeclareMathOperator{\wt}{wt}
\newtheorem{theorem}{Theorem$\!$}
\newtheorem{lemma}{Lemma$\!$}
\newtheorem{claim}{Claim$\!$}
\newtheorem{corollary}{Corollary$\!$}
\newtheorem{proposition}{Proposition$\!$}
\theoremstyle{definition}
\newtheorem{construction}{Construction$\!$}
\newtheorem{definition}{Definition$\!$}
\newtheorem{example}{Example$\!$}
\newcommand{\cB}{\mathcal{B}}
\newcommand{\cC}{\mathcal{C}}
\newcommand{\cM}{\mathcal{M}}
\newcommand{\mybold}[1]{\bm{#1}}
\newcommand{\ba}{{\mybold{a}}}
\newcommand{\bb}{{\mybold{b}}}
\newcommand{\bc}{{\mybold{c}}}
\newcommand{\bd}{{\mybold{d}}}
\newcommand{\be}{{\mybold{e}}}
\newcommand{\bs}{{\mybold{s}}}
\newcommand{\bu}{{\mybold{u}}}
\newcommand{\bv}{{\mybold{v}}}
\newcommand{\bx}{{\mybold{x}}}
\newcommand{\by}{{\mybold{y}}}
\newcommand{\bz}{{\mybold{z}}}
\newcommand{\bgamma}	{\mybold{\gamma}}
\newcommand{\bepsilon}	{\mybold{\epsilon}}
\newcommand{\bphi}		{\mybold{\phi}}
\newcommand{\bpsi}		{\mybold{\psi}}
\newcommand{\defeq}{\mathrel{\stackrel{\makebox[0pt]{\mbox{\normalfont\tiny def}}}{=}}}
\DeclareMathAlphabet{\mathcalligra}{T1}{calligra}{m}{n} \DeclareFontShape{T1}{calligra}{m}{n}{<->s*[2.2]callig15}{}
\begin{document}

\title{Codes for Correcting $t$ Limited-Magnitude \\Sticky Deletions}

\author{\textbf{Shuche Wang}\IEEEauthorrefmark{1},
        \textbf{Van Khu Vu}\IEEEauthorrefmark{4},
and        \textbf{Vincent Y.~F.~Tan}\IEEEauthorrefmark{2}\IEEEauthorrefmark{3}\IEEEauthorrefmark{1}
        \\[0.5mm]
\IEEEauthorblockA{
\IEEEauthorrefmark{1} \small Institute of Operations Research and Analytics, National University of Singapore, Singapore \\[0.5mm]
\IEEEauthorrefmark{2} \small Department of Mathematics, National University of Singapore, Singapore\\[0.5mm]
\IEEEauthorrefmark{3} \small Department of Electrical and Computer Engineering, National University of Singapore, Singapore\\[0.5mm]
\IEEEauthorrefmark{4} \small Department of Industrial Systems Engineering
and Management, National University of Singapore, Singapore\\[0.5mm]
}
{Emails:\, shuche.wang@u.nus.edu,   isevvk@nus.edu.sg,  vtan@nus.edu.sg }      }

\maketitle
\begin{abstract}
    Codes for correcting sticky insertions/deletions and limited-magnitude errors have attracted significant attention due to their applications of flash memories, racetrack memories, and DNA data storage systems. In this paper, we first consider the error type of $t$-sticky deletions with $\ell$-limited-magnitude and propose a non-systematic code for correcting this type of error with redundancy $2t(1-1/p)\cdot\log(n+1)+O(1)$, where $p$ is the smallest prime larger than $\ell+1$. Next, we present a systematic code construction with an efficient encoding and decoding algorithm with redundancy  $\frac{\lceil2t(1-1/p)\rceil\cdot\lceil\log p\rceil}{\log p} \log(n+1)+O(\log\log n)$, where $p$ is the smallest prime larger than $\ell+1$.
\end{abstract}

\section{introduction}

Coding techniques for data storage technologies, such as flash memories, racetrack memories, and DNA data storage systems, have attracted significant interest recently. Unlike conventional communication and storage systems, substitutions are not the dominant type of error in these emerging storage systems. \textit{Sticky insertions and deletions} are prevalent among the file synchronization, racetrack memories~\cite{chee2018coding} and DNA data storage~\cite{jain2017duplication,kovavcevic2018asymptotically}. In addition, \textit{limited-magnitude errors} occur frequently in flash memories where information is stored in the corresponding level of cells \cite{cassuto2010codes}. Also, part of the information is represented by the lengths of runs in some DNA data storage systems \cite{lee2019terminator} and \textit{limited-magnitude errors} can occur in the process of synthesizing DNA sequences.

The problem of correcting sticky deletions/insertions was first studied by Levenshtein, who proposed a construction for correcting $r$ sticky-insertions/deletions in \cite{levenshtein1965binary}. Levenshtein also provided a lower bound and an upper bound on the largest size of the code for correcting $r$ sticky-insertions/deletions\cite{levenshtein1965binary}. Dolecek et al.\  \cite{dolecek2010repetition} proposed a code for correcting $r$ sticky-insertions with size 
at most $2^{n+r}/n^r$, which improves the lower bound in \cite{levenshtein1965binary}. In addition to non-efficient code construction, Mahdavifar et al.~\cite{mahdavifar2017asymptotically} proposed an asymptotically optimal systematic sticky-insertion-correcting code. Besides, sticky deletions/insertions and duplication deletions can be considered as asymmetric deletions/insertions via the Gray mapping~\cite{tallini2010efficient}. Tallini et al.~\cite{tallini2008new,tallini2010efficient,tallini20111,tallini20131,tallini2022deletions} provided a series of theories and code designs for correcting asymmetric deletions/insertions.
For the sticky-deletions/insertions code in the practical storage system, Chee et al.~\cite{chee2018coding}  presented constructions of codes for sticky insertions in the racetrack memory scheme and Jain et al.~\cite{jain2017duplication} proposed codes for correcting duplication errors in DNA data storage systems, which is highly relevant to the sticky-insertion/deletion problem.   

Cassuto et al.~\cite{cassuto2010codes} studied \textit{asymmetric limited-magnitude} errors and proposed a code for correcting these errors. This type of error can be generalized as the error ball $\cB(n,t,k_{+},k_{-})$, where at most $t$ entries increase by at most $k_{+}$ and decrease by at most $k_{-}$ for a sequence with length $n$. Hence, the code for correcting $t$ asymmetric limited-magnitude errors is equivalent to a packing of $\Sigma^{n}$ by the error ball $\cB(n,t,k_{+},k_{-})$. There is a series of works studying the packing/tiling by $\cB(n,t,k_{+},k_{-})$ beginning with $t=1$ in \cite{schwartz2011quasi} and extending to the general constant $t\ge 2$ in \cite{wei2021lattice,wei2022tilings}.

In this work, our goal is to construct codes for correcting $t$-sticky-deletions with $\ell$-limited-magnitude where both $t$ and $\ell$ are constants. This means that deletions occur in at most $t$ runs and at most $\ell$ repeated bits are allowed to be deleted in each run. Our main results are the following:
\begin{itemize}
    \item We present a non-systematic code for correcting $t$-sticky-deletions with $\ell$-limited-magnitude that has redundancy $2t(1-1/p)\cdot\log(n+1)+O(1)$, where $p$ is the smallest prime larger than $\ell+1$. The redundancy of the code can be further reduced for the special case when $t=1$.
    \item We propose a systematic code for correcting $t$-sticky-deletions with $\ell$-limited-magnitude with efficient encoding and decoding algorithm that has redundancy $\frac{\lceil2t(1-1/p)\rceil\cdot\lceil\log p\rceil}{\log p} \log(n+1)+O(\log\log n)$, where $p$ is the smallest prime larger than $\ell+1$.
\end{itemize}

If we ignore the difference between  $\lceil\log p\rceil$ and $\log p$, we notice that the redundancy of our systematic code construction is only off from the non-systematic code by at most $O(\log\log n)$. 

The paper is organized as follows.  Notation and preliminaries are stated in Section~\ref{sec:notation}.  Section~\ref{sec:nonsystem} presents a non-systematic code for correcting $t$-sticky-deletions with $\ell$-limited-magnitude and provides a lower bound of the size of this code. Section~\ref{sec:system} presents efficient encoding and decoding algorithms with an analysis of redundancy and time complexity. Finally, Section~\ref{sec:conclu} concludes this paper.

\section{Notation and Preliminaries}\label{sec:notation}
We now describe the notations used throughout this paper. Let $\Sigma_q$ denote a finite alphabet of size $q$ and $\Sigma_q^n$ represent the set of all sequences of length $n$ over $\Sigma_q$. Without loss of generality, we assume $\Sigma_q=\{0,1,\dotsc,q-1\}$. For ease of notation, we will denote 
the set $\{1,2,\ldots,m\}$ as $[m]$. 
For two integers $i<j$, let $[i,j]$ denote the set $\{i,i+1,i+2, \ldots, j\}$.

We write sequences with bold letters, such as $\bx$, and their elements with plain letters, e.g., $\bx=x_1\dotsm x_n$ for $\bx\in \Sigma_q^n$. For functions, if the output is a sequence, we also write them with bold letters, such as $\bphi(\bx)$. The $i$th element in $\bphi(\bx)$ is denoted $\phi(\bx)_i$. 
$\bx_{[i,j]}$ denotes the substring beginning  at index $i$ and ending at index $j$, inclusive. A run is a maximal substring consisting of identical symbols and $n_r(\bx)$ denotes the number of runs of the sequence $\bx$. The weight $\wt(\bx)$ of a sequence $\bx$ represents the number of non-zero symbols in it. 

\begin{definition}
Define function $\bpsi: \Sigma_2^n\rightarrow \Sigma_2^n$ such that
\begin{equation*}\label{eq:psi}
    \psi(\bx)_i=\left\{
    \begin{array}{ll}
    x_{i}\oplus x_{i+1}, &  i=1,2,\dotsc,n-1\\
    x_n, & i=n 
\end{array}
\right.
\end{equation*}
where $a\oplus b$ denotes $(a+b)\bmod 2$.
\end{definition}

For a binary sequence $\bx\in\Sigma_2^n$, we can uniquely write it as $\bx=0^{r_1}10^{r_2}10^{r_3}\dotsc 10^{r_{w+1}}$, where $w=\wt(\bx)$. For the sake of convenience in the following paper, we append a bit $1$ at the end of $\bx$ and denote it as $\bx1$. Since the sequence $\bx1$ always ends with $1$, $\bx1$ can be always written as $\bx1=0^{r_1}10^{r_2}10^{r_3}\dotsc 0^{r_{w}}1$, where $w=\wt(\bx1)$. 
\begin{definition}
    Define function $\bgamma:\Sigma_2^n\rightarrow\Sigma^{w}$ and $\bgamma(\bx1)\defeq(r_1,r_2,r_3,\dotsc,r_{w})\in\Sigma^{w}$, where $\bx1=0^{r_1}10^{r_2}10^{r_3}\dotsc 0^{r_{w}}1$ and $w=\wt(\bx1)$.
\end{definition}

\begin{definition}
    Define function $\bphi:\Sigma_2^{n}\rightarrow \Sigma^{n_r}$ such that $\bphi(\bx1)=\bgamma(\bpsi(\bx1))$, where $n_r=n_r(\bx1)$. 
\end{definition}
Since both mapping functions $\bpsi$ and $\bgamma$ are one-to-one mapping functions, the mapping function $\bphi$ is also one-to-one mapping.

\begin{example}
    Suppose $\bx = 0111010100$. Then, $\bx1 = 01110101001$ with  $n_r(\bx1)=8$  and $\bpsi(\bx1) =10011111011$. Also, $\bphi(\bx1)=\bgamma(\bpsi(\bx1))=02000010$ with length 8.
\end{example}

\begin{definition}
    Given a sequence $\by\in\Sigma_q^n$, define $\partial:\Sigma_q^n\rightarrow\Sigma_2^n$ such that
\begin{equation*}\label{eq:psi}
    \partial(\by)_i=\left\{
    \begin{array}{ll}
    0, &  {\mathrm{when}}\; y_i= 0;\\
    1, & {\mathrm{otherwise}}.
\end{array}
\right.
\end{equation*}
\end{definition}
 In addition, for a sequence $\by\in\Sigma_q^n$, denote $(\by \bmod a)=(y_1 \bmod a, y_2 \bmod a, \dotsc, y_n \bmod a)$, where $a<q$.

\begin{definition}
    A sticky deletion denotes deleting $k$ repetition bits in a run of a sequence, but cannot delete the whole run.
\end{definition}

\begin{proposition}\label{clm:asticky}
    For a binary sequence $\bx\in\Sigma_2^n$, deleting $k$ repetition bits in a run of $\bx$ is equivalent to the corresponding entry of $\bphi(\bx)$ suffers a decrease by $k$.
\end{proposition}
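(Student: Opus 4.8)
The plan is to first make explicit what the map $\bphi$ actually computes, and then read off the proposition as an essentially immediate consequence.

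\textbf{Step 1 (identifying $\bphi$ with the run-length profile).} I would first show that if the runs of $\bx1$ have lengths $\ell_1,\ell_2,\dotsc,\ell_{n_r(\bx1)}$ read from left to right, then $\phi(\bx1)_i=\ell_i-1$ for every $i$. To see this, note that $\psi(\bx1)_i=x_i\oplus x_{i+1}$ equals $1$ exactly when a run of $\bx1$ ends at position $i$, i.e.\ when a transition occurs between positions $i$ and $i+1$, while the boundary rule setting the last coordinate of $\bpsi(\bx1)$ equal to the final bit of $\bx1$ marks the end of the last run, since $\bx1$ terminates in $1$. Hence the $1$'s of $\bpsi(\bx1)$ occur precisely at the cumulative positions $\ell_1,\ \ell_1+\ell_2,\ \dotsc,\ \ell_1+\dotsb+\ell_{n_r(\bx1)}$, so writing $\bpsi(\bx1)=0^{r_1}10^{r_2}1\dotsm 0^{r_{n_r(\bx1)}}1$ forces $r_i=\ell_i-1$. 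Applying $\bgamma$ then gives $\bphi(\bx1)=\bgamma(\bpsi(\bx1))=(\ell_1-1,\dotsc,\ell_{n_r(\bx1)}-1)$, which is exactly this claim; the worked example ($\bx=0111010100$, whose runs of $\bx1$ have lengths $1,3,1,1,1,1,2,1$ and indeed $\bphi(\bx1)=02000010$) serves as a sanity check.

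\textbf{Step 2 (effect of a sticky deletion on the run structure).} I would then fix the run of $\bx$ in which the $k$ repetition bits are deleted; under the identification above this is the $j$-th run of $\bx1$, of length $\ell_j$. Since a sticky deletion cannot erase the whole run, at least one symbol of that run survives, so no two neighbouring runs merge and the total number of runs $n_r(\bx1)$ is preserved. The sole effect on the run-length profile is that $\ell_j$ is replaced by $\ell_j-k$, while every other $\ell_i$ is unchanged.

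\textbf{Step 3 (conclusion).} Combining the two steps, the deleted sequence has $\bphi$-image whose $j$-th coordinate equals $(\ell_j-k)-1=(\ell_j-1)-k=\phi(\bx1)_j-k$ and whose other coordinates are unchanged, so deleting $k$ repetition bits from a run is exactly a decrease by $k$ in the corresponding entry of $\bphi$. I expect the only delicate point to be the bookkeeping around the appended $1$: one must verify that the correspondence between a run of $\bx$ and a run of $\bx1$ remains consistent when $\bx$ ends in $1$ (so that the appended bit lengthens the final run rather than creating a new one) and that the rule defining the last coordinate of $\bpsi(\bx1)$ correctly accounts for the final run boundary. Once Step~1 is stated cleanly, the remaining reasoning is routine.
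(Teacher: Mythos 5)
Your proposal is correct and follows essentially the same route as the paper's proof, which simply observes that $\phi(\bx1)_i$ equals the length of the $i$-th run minus one and that deleting $k$ repeated bits shortens that run by $k$; your Step~1 merely supplies the detailed verification of this identity via $\bpsi$ and $\bgamma$ that the paper leaves implicit. Your explicit remark that a sticky deletion cannot erase a whole run (so no runs merge and $n_r$ is preserved) is a worthwhile addition the paper glosses over, but it does not change the argument.
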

\begin{proof}
    Based on the definition of the mapping function $\bphi$, the value of each symbol $\bphi(\bx)_i$ is the length of $i$-th run of the sequence $\bx$ minus 1. Deleting $k$ repetition bits in a run of $\bx$ means the length of this run is decreased by $k$.
\end{proof}

Therefore, \emph{$t$ sticky deletions} pattern $\bd$ is a sequence $(d_1,d_2,\dotsc,d_{n_r})$ with $\wt(\partial(\bd))=t$. Suppose $\bx\in\Sigma_2^{n}$ is transmitted and corrupted by $\bd$, the received sequence $\bx'$ should be $\bphi(\bx'1)=\bphi(\bx1)-\bd$.

Given a sequence $\by\in\Sigma_q^n$, \emph{asymmetric $\ell$-limited-magnitude $t$ errors} denote at most $t$ of entries of $\by$ suffer a decrease/an increase by as most $\ell$, the corrupted sequence can be written as $\by'=\by-\be$, where $\wt(\partial(\be))=t$ and $e_i\leq \ell, \forall[n]$. Therefore, throughout this paper, we provide the definition of \emph{$t$ sticky deletions with $\ell$-limited-magnitude}.

\begin{definition}
\emph{$t$ sticky deletions with $\ell$-limited-magnitude} denote that given a sequence $\bx\in\Sigma_2^n$, at most $t$ of entries of $\bphi(\bx1)=(u_1,\dotsc,u_{n_r})$ suffer a decrease by at most $\ell$.  The corrupted sequence is $\bx'$ and $\bphi(\bx'1)$ can be written as $\bphi(\bx'1)=(v_1,\dotsc,v_{n_r})$ with $n_r=n_r(\bx'1)=n_r(\bx1)$, where
\begin{enumerate}
    \item \emph{Limited-magnitude deletions:} $u_i- v_i\leq \ell$ and $v_i\leq u_i, \forall i\in [n_r]$;
    \item \emph{$t$ Sticky-deletions:} Number of index $i$ is at most $t$ such that $v_i\neq u_i, \forall i\in[n_r]$.
\end{enumerate}
\end{definition}

\begin{example}
    Suppose we have $\bx=0100111001\in\Sigma_2^{10}$, hence $\bphi(\bx1)=001211$. If the retrieved sequence $\bx'=010101\in\Sigma_2^{5}$ and the corresponding $\bphi(\bx'1)=000001$, by comparing $\bphi(\bx1)$ and $\bphi(\bx'1)$, we can see the limited magnitude of the deletion is $\ell\leq 2$ and the number of sticky-deletions is $t=3$.
\end{example}

Denote $\Phi$ be the set of mapping $\Sigma_2^n$ by the function $\bphi$ and $\Sigma_2^n$ as the set containing all binary sequences with length $n$. Then, we will show the cardinality of $\Phi$. The proof is given in Appendix~\ref{app:size_phix}. 
\begin{restatable}{lemma}{sizephi}\label{lem:size_phix}
     The cardinality of $\Phi$ is:
    \begin{equation*}
    |\Phi|=\sum_{n_r=1}^{n+1}{n \choose n_r-1}=2^n.
    \end{equation*}
\end{restatable}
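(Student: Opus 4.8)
The plan is to exploit the fact, already noted in the text, that $\bphi=\bgamma\circ\bpsi$ is a composition of two one-to-one maps and is therefore itself injective on $\Sigma_2^n$. Consequently $\bphi$ maps $\Sigma_2^n$ bijectively onto $\Phi$, so that $|\Phi|=|\Sigma_2^n|=2^n$; this already delivers the right-hand equality for free. The real work lies in re-deriving the same count through the middle expression $\sum_{n_r=1}^{n+1}\binom{n}{n_r-1}$, which amounts to grouping the preimages according to the number of runs.

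First I would stratify $\Sigma_2^n$ (equivalently $\Phi$, through the bijection $\bphi$) by the quantity $n_r=n_r(\bx1)$, which by definition is exactly the length of the tuple $\bphi(\bx1)$. Because $\bphi$ is injective, distinct sequences yield distinct tuples, so $|\Phi|$ equals the sum over $n_r$ of the number of $\bx\in\Sigma_2^n$ with $n_r(\bx1)=n_r$, with no tuple counted twice. It then remains to count, for each fixed $n_r$, how many $\bx$ satisfy $n_r(\bx1)=n_r$.

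The key step is to observe that appending a $1$ forces $\bx1$ to be a length-$(n+1)$ string ending in $1$, and that its number of runs equals $\wt(\bpsi(\bx1))$, since the differenced sequence $\bpsi(\bx1)$ carries a $1$ exactly at each run boundary together with the final coordinate. A length-$(n+1)$ binary string ending in $1$ with exactly $n_r$ ones is pinned down by choosing the positions of the remaining $n_r-1$ ones among the first $n$ coordinates, giving $\binom{n}{n_r-1}$ such strings. Equivalently, one may characterize $\Phi$ directly: $\bpsi$ preserves the ``ends in $1$'' property, and $\bgamma$ sends the weight-$w$ strings bijectively onto the nonnegative integer solutions of $r_1+\dots+r_w=n+1-w$, whence a stars-and-bars count produces $\binom{n}{w-1}$ tuples of each length $w$. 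Either route yields $|\Phi|=\sum_{n_r=1}^{n+1}\binom{n}{n_r-1}$.

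Finally I would collapse the sum via the index shift $k=n_r-1$, turning it into $\sum_{k=0}^{n}\binom{n}{k}=2^n$ by the binomial theorem, which closes the chain of equalities. The only genuinely delicate point is the combinatorial identification of the run count of $\bx1$ with a binomial coefficient, i.e. making precise that $n_r(\bx1)=\wt(\bpsi(\bx1))$ and that fixing this weight leaves exactly $\binom{n}{n_r-1}$ degrees of freedom; everything after that is the routine binomial summation.
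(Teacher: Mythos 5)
Your proposal is correct and follows essentially the same route as the paper's own proof: stratify by the run count $n_r(\bx1)$, use stars-and-bars (equivalently, choosing the positions of $n_r-1$ ones among the first $n$ coordinates of $\bpsi(\bx1)$) to get $\binom{n}{n_r-1}$ per stratum, and use injectivity of $\bphi$ to identify $|\Phi|$ with $2^n$. The only cosmetic difference is that you additionally close the loop with the explicit binomial identity $\sum_{k=0}^{n}\binom{n}{k}=2^n$, whereas the paper lets that identity follow implicitly from computing $|\Phi|$ in two ways.
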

For shorthand, let $D_{t,\ell}(\bx) \subseteq \Sigma_2^{n-t\ell}$ denote the set of all sequences possible given that $t$ sticky-deletions with $\ell$-limited-magnitude occur to $\bx$. The size of a code $\cC\subseteq \Sigma_2^n$ is denoted $|\cC|$ and its redundancy is defined as $\log (2^n/|\cC|)$, where all logarithms without a base in this paper are to the base 2. We say that a code $\cC_{t,\ell} \subseteq \Sigma_2^n$ is a \emph{$t$ sticky-deletion with $\ell$-limited-magnitude correcting code} if for two distinct $\bx, \by \in \cC_{t,\ell}$, $D_{t,\ell}(\bx) \cap D_{t,\ell}(\by)  = \emptyset$.

\begin{lemma}(cf. \cite{levenshtein1965binary})\label{lem:opt_codesize}
An upper bound on the largest size $\cM_{sr}$ of the code $\cC_{sr}$ capable of correcting $r$ sticky insertions/deletions\footnote{$r$ denotes the maximum number of sticky-deletions, but in our definition $t$ denotes sticky-deletions occur in at most $t$ runs.} is:
\begin{equation*}
        \cM_{sr}=
    \begin{cases}
   2^{n+r}\cdot r!/n^{r}, &  {\mathrm{when}}\; r \;{\mathrm{is\; odd}}\\
   2^{n+3r/2}\cdot ((r/2)!)^2/n^{r}. & {\mathrm{when}}\; r \;{\mathrm{is\; even}}.
\end{cases}
\end{equation*}  
\end{lemma}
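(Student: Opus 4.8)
The plan is to establish this bound by a volume (sphere-packing) argument transported into the run-length domain, which is the natural setting since sticky errors act locally on runs. The first step is to replace each binary string by its run-length vector using the bijection $\bgamma$ (equivalently $\bphi$): a length-$n$ string with $w$ runs corresponds to a composition $(a_1,\dots,a_w)$ with $a_i\ge 1$ and $\sum_i a_i=n$. By Proposition~\ref{clm:asticky}, one sticky deletion is exactly a unit decrement of a single coordinate, and symmetrically one sticky insertion is a unit increment; neither operation can create, destroy, merge, or split a run. Consequently every sticky channel preserves $w$ and the starting symbol, so $\Sigma_2^n$ splits into classes indexed by $(w,\text{first bit})$, and inside each class the channel is an $\ell_1$-type lattice channel on the positive simplex $\{a\in\mathbb{Z}_{\ge 1}^{w}:\sum_i a_i=n\}$.

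Second, I would convert the defining property of the code---disjointness of the error sets---into a counting inequality. Fixing a codeword with $w$ runs, the set of words reachable under $r$ sticky operations is a lattice ball in the above simplex, and for a correcting code these balls (or, after choosing to realize the budget as $i$ insertions and $d$ deletions with $i+d=r$, the corresponding output sets) must be pairwise disjoint. Summing the sizes of these output sets over all codewords and bounding the union by the size of the relevant output space $\Sigma_2^{n+i-d}$ yields an inequality of the form $|\cC_{sr}|\cdot(\text{ball size})\le 2^{n+i-d}$, from which $|\cC_{sr}|$ is bounded above.

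Third, the explicit constants and the even/odd dichotomy would emerge from two optimizations. The size of the lattice ball is governed asymptotically by a factor $w^{r}/(i!\,d!)$, so the realized split $i+d=r$, together with the exact (parity-sensitive) volume of the $\ell_1$ ball subject to the boundary constraint $a_i\ge 1$, controls whether $((r/2)!)^2$ or $r!$ appears, while the powers of two track the output length $n+i-d$. To obtain the $n^{r}$ in the denominator I would average the ball size over the run count $w$, using that the number of length-$n$ strings with $w$ runs is $2\binom{n-1}{w-1}$, and invoke concentration/convexity to evaluate the ball at the dominant value of $w$.

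The \emph{main obstacle} is the exact, rather than asymptotic, evaluation of the lattice-ball volume. The constraint $a_i\ge 1$ (a run can be shortened but never deleted) breaks the clean binomial count precisely for the codewords that matter most---those with many short runs---and the ball size is strongly non-uniform in $w$, so a crude worst-case estimate misses the stated constant by polynomial factors. Pinning down the exact constant, including the parity split that distinguishes $r$ odd from $r$ even and the precise power of two, is the technical heart of the argument; this is exactly Levenshtein's classical computation in~\cite{levenshtein1965binary}, which one would either reproduce carefully or invoke directly.
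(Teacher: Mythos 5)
This statement is quoted from Levenshtein's 1965 paper and is given no proof in the present paper at all---it is invoked as a known external result---so there is no internal argument to compare yours against. Your outline does reconstruct the right strategy (the one underlying the classical proof): pass to run-length vectors, observe that sticky operations preserve the number of runs $w$ and act as unit increments/decrements on the composition $(a_1,\dots,a_w)$, and then sphere-pack with $\ell_1$-type lattice balls, with the even/odd dichotomy coming from splitting the budget into $i$ insertions and $d$ deletions. That much is sound and consistent with Proposition~\ref{clm:asticky} and the $\bgamma$/$\bphi$ machinery of Section~\ref{sec:notation}.

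However, as a standalone proof the proposal has a genuine gap, and it sits exactly where you flag it. The packing inequality $\sum_{\bc}|B(\bc)|\le|\Sigma_2^{n+i-d}|$ only yields $|\cC_{sr}|\cdot\min_{\bc}|B(\bc)|\le 2^{n+i-d}$, and the minimum ball size over all codewords is attained at small $w$ (few runs), where the ball is far smaller than the $w^{r}/(i!\,d!)$ you need with $w\asymp n$; so the bound $n^{r}$ in the denominator does not follow from a direct union bound. One must either restrict attention to codewords with $w$ near its typical value and separately dispose of the atypical ones, or run Levenshtein's exact generating-function count of the lattice balls under the constraint $a_i\ge 1$---and this is also where the precise constants $r!$ versus $((r/2)!)^2$ and the power of two are decided. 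Your proposal correctly identifies this as the technical heart but does not carry it out, deferring to~\cite{levenshtein1965binary}; since the paper itself simply cites that reference, this is an acceptable resolution, but you should be explicit that what you have written is a roadmap to Levenshtein's computation rather than a self-contained derivation of the stated constants.
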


\begin{corollary}\label{cor:lowerbound}
    Based on Lemma~\ref{lem:opt_codesize}, a lower bound on the redundancy of the code capable of correcting $r$ sticky insertions/deletions is $r\log n+O(1)$ for constant $r$.
\end{corollary}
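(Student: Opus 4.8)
The plan is to derive the redundancy lower bound directly from the size bound in Lemma~\ref{lem:opt_codesize} by taking logarithms and performing an asymptotic expansion in $n$ for fixed $r$. Recall that the redundancy of a code $\cC$ is defined in the excerpt as $\log(2^n/|\cC|)$, so I would begin by writing the redundancy of any $r$-sticky-insertion/deletion-correcting code $\cC_{sr}$ as $\log 2^n - \log|\cC_{sr}| = n - \log|\cC_{sr}|$, and then lower-bound it using $|\cC_{sr}|\le \cM_{sr}$, which gives redundancy $\ge n - \log \cM_{sr}$.

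The main computation is to evaluate $n - \log \cM_{sr}$ in each of the two cases. For $r$ odd, $\cM_{sr}=2^{n+r}\cdot r!/n^{r}$, so $\log \cM_{sr} = n + r + \log(r!) - r\log n$, and hence $n-\log\cM_{sr} = r\log n - r - \log(r!)$. For $r$ even, $\cM_{sr}=2^{n+3r/2}\cdot((r/2)!)^2/n^{r}$, so $\log\cM_{sr}=n+3r/2+2\log((r/2)!)-r\log n$, giving $n-\log\cM_{sr}=r\log n - 3r/2 - 2\log((r/2)!)$. In both cases the leading term is exactly $r\log n$, and since $r$ is a fixed constant the remaining terms $-r-\log(r!)$ (odd case) and $-3r/2-2\log((r/2)!)$ (even case) are constants in $n$, i.e.\ $O(1)$. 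Thus in either parity the redundancy is at least $r\log n + O(1)$, which is the claimed bound.

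I would then note the one subtlety worth flagging: Lemma~\ref{lem:opt_codesize} is stated as an \emph{upper} bound on the largest code size, so it translates correctly into a \emph{lower} bound on the redundancy of every such code, which is exactly the direction Corollary~\ref{cor:lowerbound} asserts. The only mild obstacle is bookkeeping the two parity cases and confirming that all $r$-dependent-but-$n$-independent terms are absorbed into $O(1)$; there is no analytic difficulty since no Stirling-type approximation is needed for fixed $r$. The result follows immediately once the logarithm is expanded in both cases and the common leading coefficient $r$ of $\log n$ is identified.
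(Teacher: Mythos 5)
Your proposal is correct and follows exactly the route the paper intends: the corollary is stated as an immediate consequence of Lemma~\ref{lem:opt_codesize}, obtained by taking logarithms of the size bound $\cM_{sr}$ in each parity case and observing that all $r$-dependent but $n$-independent terms are $O(1)$ for constant $r$. Your explicit bookkeeping of the two cases and the remark that an upper bound on code size yields a lower bound on redundancy are precisely the (omitted) details the paper relies on.
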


According to our definition of $t$ sticky-deletions with $\ell$-limited-magnitude, we can see the maximal total number of deletions is at most $t\ell$. Thus, the code $\cC_{sr}$ in Lemma~\ref{lem:opt_codesize} can trivially correct $t$ sticky-deletions with $\ell$-limited-magnitude by letting $r=t\ell$. When $r=t\ell$, based on Corollary~\ref{cor:lowerbound}, the lower bound of redundancy of the code $\cC_{sr}$ is $t\ell\log n+O(1)$ without the constraint of $\ell$-limited-magnitude. In this paper, by introducing the constraint of the magnitude of deletions, we can further reduce the code redundancy from at least $t\ell\log n+O(1)$ to at most $2t(1-1/p)\log(n+1)+O(1)$ when $\ell>2$, where $p$ is the smallest prime larger than $\ell+1$.

\begin{proposition}\label{clm:error_type}
    $t$ sticky-deletions with $\ell$-limited-magnitude in $\bx$ are equivalent to $t$ asymmetric $\ell$-limited-magnitude errors in $\bphi(\bx1)$.
\end{proposition}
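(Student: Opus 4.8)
The plan is to show that the run-length map $\bphi$ (applied to the one-appended sequences) carries the sticky-deletion error model on $\bx$ exactly onto the asymmetric limited-magnitude error model on $\bphi(\bx1)$, and that this carrying is a bijection on error instances rather than a mere one-way reduction. The starting point is Proposition~\ref{clm:asticky}: deleting $k$ repetition bits from a single run of $\bx1$ leaves every other run untouched and decreases precisely the one corresponding entry of $\bphi(\bx1)$ by $k$, since $\bphi(\bx1)_i$ equals the length of the $i$-th run of $\bx1$ minus one. Before translating anything, I would record the crucial structural fact that, by definition, a sticky deletion can never remove an entire run; hence each affected entry $u_i$ is only reduced to some $v_i$ with $v_i\ge 0$, no run is annihilated, and therefore $n_r(\bx'1)=n_r(\bx1)$. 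This is what lets the two vectors $\bphi(\bx1)$ and $\bphi(\bx'1)$ live in the same coordinate space $\Sigma^{n_r}$ and be compared and subtracted coordinatewise, even though $\bx'$ is strictly shorter than $\bx$.

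For the forward implication I would set $\be\defeq\bphi(\bx1)-\bphi(\bx'1)$, i.e.\ $e_i=u_i-v_i$, and read off the two defining conditions of $t$ sticky deletions with $\ell$-limited-magnitude as error-pattern constraints on $\be$. The limited-magnitude condition $v_i\le u_i$ and $u_i-v_i\le \ell$ becomes $0\le e_i\le \ell$, so $\be$ is an asymmetric (decrease-only) error of magnitude at most $\ell$; the sticky condition ``$v_i\neq u_i$ for at most $t$ indices'' becomes $\wt(\partial(\be))\le t$. Applying Proposition~\ref{clm:asticky} run-by-run to the at most $t$ affected runs then shows that $\bphi(\bx'1)=\bphi(\bx1)-\be$ is exactly the output of $t$ asymmetric $\ell$-limited-magnitude errors acting on $\bphi(\bx1)$.

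For the converse I would run the same correspondence backwards. Given any asymmetric $\ell$-limited-magnitude error pattern $\be$ with $0\le e_i\le \ell$ and $\wt(\partial(\be))\le t$, the vector $\bphi(\bx1)-\be$ has non-negative entries and length $n_r$, so it is a legitimate point of the image set $\Phi$; because $\bphi$ is a one-to-one mapping, it equals $\bphi(\bx'1)$ for a unique binary $\bx'$, and decreasing each affected entry by $e_i$ is realized concretely by deleting $e_i$ repetition bits from the corresponding run of $\bx1$ (again via Proposition~\ref{clm:asticky}). Thus every asymmetric limited-magnitude error on $\bphi(\bx1)$ lifts to a genuine sticky-deletion error on $\bx$, and combining the two directions yields the claimed equivalence together with the bijectivity of the error correspondence.

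The argument is essentially a bookkeeping translation once Proposition~\ref{clm:asticky} and the bijectivity of $\bphi$ are in hand, so I do not anticipate a serious obstacle. The single point that genuinely needs care is the invariance of the run count, equivalently that every affected $v_i$ stays $\ge 0$ so that no run is destroyed or merged; this is precisely what guarantees that $\bphi(\bx1)$ and $\bphi(\bx'1)$ share the coordinate space $\Sigma^{n_r}$ and that the map between the two error models is a bijection, which is what ultimately makes a code against one model a code against the other.
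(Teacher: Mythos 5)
Your argument is correct and follows exactly the route the paper intends: the paper's own proof is a one-line remark that the claim follows by extending Proposition~\ref{clm:asticky}, and your write-up is simply that extension carried out in full (run-by-run application of Proposition~\ref{clm:asticky}, preservation of the run count, and the two error-pattern conditions translating into $0\le e_i\le\ell$ and $\wt(\partial(\be))\le t$). The only cosmetic imprecision is calling $\bphi(\bx1)-\be$ a point of $\Phi$, which the paper defines as the image of $\Sigma_2^n$ only, whereas the corrupted word is shorter; but this does not affect the validity of the argument.
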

\begin{proof}
    It can be easily shown by extending the aforementioned Proposition~\ref{clm:asticky}.
\end{proof}
Based on Proposition~\ref{clm:error_type}, we can see that the construction of correcting $t$ sticky-deletions with $\ell$-limited-magnitude in $\bx\in\Sigma_2^n$ is equivalent to the code construction for correcting $t$ asymmetric $\ell$-limited-magnitude errors in $\bphi(\bx1)\in\Sigma^{n_r}$, where $n_r=n_r(\bx1)$ and $\wt(\bphi(\bx1))=n+1-n_r$.

\section{Non-systematic Code Construction}\label{sec:nonsystem}
In this section, we will provide a non-systematic construction for the code capable of correcting $t$ sticky deletions with $\ell$-limited-magnitude. Then, we present the decoding algorithm of this code and a lower bound of the code size.

\begin{construction}\label{con:cons1}
	The code $\cC_{t,\ell}$ is defined as
	\begin{multline*}
		\cC_{t,\ell}=\{\bx\in\Sigma_2^n:\bphi(\bx1)\bmod q\in\cC_{q}, \\ \wt(\bphi(\bx1))=n+1-n_r\},
	\end{multline*}
 where $n_r=n_r(\bx1)$ and $\cC_{q}$ is a code over $\Sigma_{q}$ with $q=\ell+1$.
\end{construction}

\begin{lemma}\label{lem.code.1}
	$\cC_{t,\ell}$ is capable of correcting $t$ sticky-deletions with $\ell$-limited-magnitude for $\bx\in \cC_{t,\ell}$ if $\cC_{q}$ is capable of correcting $t$ symmetric errors for $\bphi(\bx1)$.
\end{lemma}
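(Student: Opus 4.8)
The plan is to exhibit an explicit decoder for $\cC_{t,\ell}$ that, on input a corrupted word $\bx'$, reconstructs $\bx$ by lifting a decoding of $\cC_q$ back to the integer-valued run-length domain. The engine of the argument is Proposition~\ref{clm:error_type}, which lets me pass from sticky deletions on $\bx$ to asymmetric $\ell$-limited-magnitude errors on $\bphi(\bx1)$, together with the observation that reducing such errors modulo $q=\ell+1$ turns them into ordinary symmetric substitution errors over $\Sigma_q$.

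First I would record the error model in the $\bphi$-domain. By Proposition~\ref{clm:error_type}, writing $\bphi(\bx1)=(u_1,\dots,u_{n_r})$ and $\bphi(\bx'1)=(v_1,\dots,v_{n_r})$, we have $v_i=u_i-e_i$ with $0\le e_i\le\ell$ for all $i$ and with at most $t$ indices satisfying $e_i\neq 0$; crucially $n_r=n_r(\bx1)=n_r(\bx'1)$ is preserved because a whole run is never deleted, so the decoder knows the length $n_r$ from $\bx'$ alone. Next I would reduce modulo $q$: since $0\le e_i\le\ell=q-1$, a nonzero $e_i$ stays nonzero modulo $q$, so $\bphi(\bx'1)\bmod q$ and $\bphi(\bx1)\bmod q$ agree outside at most $t$ coordinates and may differ arbitrarily on them. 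Thus $\bphi(\bx'1)\bmod q$ is obtained from the codeword $\bphi(\bx1)\bmod q\in\cC_q$ by at most $t$ symmetric errors, and the hypothesis that $\cC_q$ corrects $t$ symmetric errors recovers $\bphi(\bx1)\bmod q$ exactly.

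The remaining step is to lift $\bphi(\bx1)\bmod q$ back to the integer vector $\bphi(\bx1)$. Here I use that the decoder also knows $\bphi(\bx'1)=(v_1,\dots,v_{n_r})$ exactly: at each coordinate the error $e_i=u_i-v_i$ lies in $\{0,1,\dots,q-1\}$, so it is the unique residue congruent to $(u_i\bmod q)-v_i$ modulo $q$, whence $u_i=v_i+e_i$ is determined. This recovers $\bphi(\bx1)$ in full, and since $\bphi$ is a bijection (being the composition of the one-to-one maps $\bgamma$ and $\bpsi$ established in the excerpt), inverting it returns $\bx1$ and therefore $\bx$. Equivalently, one may argue by disjointness: if some word lay in $D_{t,\ell}(\bx)\cap D_{t,\ell}(\by)$ for distinct codewords, then $\bx$ and $\by$ would share the same $n_r$ and their run-length images would reduce mod $q$ to codewords of $\cC_q$ within symmetric distance $t$ of a common word, forcing $\bphi(\bx1)\equiv\bphi(\by1)\pmod q$; the magnitude bound $e_i\le q-1$ then forces the two integer error patterns to coincide, hence $\bphi(\bx1)=\bphi(\by1)$ and $\bx=\by$, a contradiction.

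I expect the one genuinely load-bearing idea to be the choice $q=\ell+1$: it is exactly the modulus for which the admissible magnitude window $\{0,\dots,\ell\}$ maps bijectively onto $\Sigma_q$, so that limited-magnitude decreases become unrestricted symmetric errors mod $q$ (allowing a generic $t$-symmetric-error code to be used) and the lift back to integers is unique. The only points needing care are bookkeeping ones: verifying that $n_r$, and hence the relevant length of $\cC_q$, is preserved under the deletions, and noting that it is knowing $\bphi(\bx'1)$ \emph{exactly}, rather than only its residue, that makes the integer lift unambiguous.
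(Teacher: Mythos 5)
Your proof is correct and follows essentially the same route as the paper: the paper likewise reduces via Proposition~\ref{clm:error_type} to asymmetric $\ell$-limited-magnitude errors on $\bphi(\bx1)$ and then invokes Theorem~4 of Cassuto et al.\ for the mod-$q$ step, which is exactly the argument you spell out in full (your explicit lift $u_i=v_i+e_i$ with $e_i$ the unique residue in $\{0,\dots,q-1\}$ is precisely Step~2 of the paper's Algorithm~\ref{alg:decalg1}). The only difference is that you prove the cited black-box result inline rather than referencing it.
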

\begin{proof}
    From Proposition~\ref{clm:asticky} and \ref{clm:error_type}, we have correcting $t$ sticky-deletions with $\ell$-limited-magnitude for $\bx$ is equivalent to correcting $t$ asymmetric $\ell$-limited-magnitude errors in $\bphi(\bx1)$. Further, Theorem~4 in \cite{cassuto2010codes} has shown that the code $\cC$ is capable of correcting $t$ asymmetric $\ell$-limited-magnitude errors if $\cC_{q}$ can correct $t$ symmetric errors, where $\cC=\{\bx\in\Sigma_2^n:\bx \bmod q\in \cC_{q}\}$ and $q=\ell+1$.
\end{proof}

\begin{lemma}(\cite{aly2007quantum}, Theorem~10 )\label{lem:prime_bch}
    Let $p$ be a prime such that the distance $2\leq d\leq p^{\lceil m/2\rceil-1}$ and $n=p^m-1$. Then, there exists a narrow-sense $[n,k,d]$-BCH code $\cC_{p}$ over $\Sigma_{p}$ with
    \begin{equation*}
        n-k=\lceil (d-1)(1-1/p)\rceil m.
    \end{equation*}
\end{lemma}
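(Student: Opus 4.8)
The plan is to turn this into a standard computation of the dimension of a narrow-sense primitive BCH code and reduce it to a counting problem about $p$-ary cyclotomic cosets modulo $n=p^m-1$. I would fix a primitive element $\alpha$ of $\bbF_{p^m}$, so that $\alpha$ is a primitive $n$-th root of unity, and take $\cC_p$ to be the cyclic code of length $n$ whose generator polynomial is $g(x)=\mathrm{lcm}\{M_1(x),\dots,M_{d-1}(x)\}$, where $M_s(x)$ is the minimal polynomial of $\alpha^s$ over $\bbF_p$. By the BCH bound this code has minimum distance at least the designed distance $d$, which justifies the notation $[n,k,d]$; the substantive part is the dimension. Writing $C_s=\{s\,p^i \bmod n : i\ge 0\}$ for the $p$-ary cyclotomic coset of $s$, the redundancy equals $n-k=\deg g=\bigl|\bigcup_{s=1}^{d-1}C_s\bigr|$, so it suffices to prove $\bigl|\bigcup_{s=1}^{d-1}C_s\bigr|=m\lceil (d-1)(1-1/p)\rceil$.

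First I would show every coset $C_s$ with $1\le s\le d-1$ has full size $|C_s|=m$. If $|C_s|=\ell<m$, then $\ell$ is a proper divisor of $m$, hence $\ell\le\lfloor m/2\rfloor$, and $n\mid s(p^\ell-1)$; since $(p^\ell-1)\mid(p^m-1)$, this forces $s$ to be a multiple of $(p^m-1)/(p^\ell-1)\ge p^{m-\ell}\ge p^{\lceil m/2\rceil}$, contradicting $s\le d-1\le p^{\lceil m/2\rceil-1}$. Consequently the union is a disjoint union of $N$ cosets each of size $m$, where $N$ is the number of distinct cosets meeting $[1,d-1]$, so $n-k=mN$.

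The core step is to identify $N$ with the number of integers in $[1,d-1]$ not divisible by $p$, which equals $(d-1)-\lfloor (d-1)/p\rfloor=\lceil (d-1)(1-1/p)\rceil$. Any $s\in[1,d-1]$ with $p\mid s$ factors as $s=p^a s_0$ with $p\nmid s_0$ and $s_0\le s\le d-1$; since $s=s_0 p^a<n$ no reduction occurs and $s\in C_{s_0}$, so multiples of $p$ add no new coset. The main obstacle is the converse: distinct $s,s'\in[1,d-1]$ with $p\nmid s$ and $p\nmid s'$ must lie in different cosets. Suppose not, so $s'\equiv s\,p^j\pmod n$ for some $1\le j\le m-1$. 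Since $p\nmid s'$, the product $s\,p^j$ cannot equal $s'$ outright (that would force $p\mid s'$), so a genuine reduction modulo $n$ happens and $s\,p^j\ge p^m$, giving $s\ge p^{m-j}$. Multiplying the congruence by $p^{m-j}$ yields $s\equiv s'p^{m-j}\pmod n$, and the symmetric argument gives $s'\ge p^{j}$. Hence $s s'\ge p^m$, whereas the hypothesis gives $s s'\le p^{2(\lceil m/2\rceil-1)}\le p^{m-1}$, a contradiction. This establishes $N=\lceil (d-1)(1-1/p)\rceil$ and therefore $n-k=m\lceil (d-1)(1-1/p)\rceil$, completing the proof.
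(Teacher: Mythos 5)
The paper does not prove this lemma at all: it is imported verbatim as Theorem~10 of the cited reference \cite{aly2007quantum}, so there is no in-paper argument to compare against. Your proof is correct and is essentially the standard argument behind that cited result — reduce the redundancy to the size of $\bigcup_{s=1}^{d-1}C_s$ for $p$-ary cyclotomic cosets modulo $p^m-1$, show each such coset has full size $m$ under the bound $d\leq p^{\lceil m/2\rceil-1}$, and show the cosets indexed by the $(d-1)-\lfloor (d-1)/p\rfloor=\lceil (d-1)(1-1/p)\rceil$ non-multiples of $p$ in $[1,d-1]$ are pairwise distinct via the $ss'\geq p^m$ contradiction. All steps check out, so you have supplied a complete self-contained proof of a fact the paper only cites.
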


\begin{theorem}\label{cor:bch}
    Let $p$ be the smallest prime such that $p\ge \ell+1$ and $n_r=n_r(\bx1)$. Then, the code $\cC_{t,\ell}$ such that
    \begin{multline*}
        \cC_{t,\ell}=\{\bx\in\Sigma_2^n:\bphi(\bx1)\bmod p\in\cC_{p}, \\ \wt(\bphi(\bx1))=n+1-n_r\}.
    \end{multline*}
    is capable of correcting $t$ sticky-deletions with $\ell$-limited-magnitude.
\end{theorem}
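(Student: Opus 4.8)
The plan is to obtain the theorem as a direct consequence of Lemma~\ref{lem.code.1} together with the existence of a suitable BCH code guaranteed by Lemma~\ref{lem:prime_bch}. By Proposition~\ref{clm:error_type}, correcting $t$ sticky-deletions with $\ell$-limited-magnitude in $\bx$ is equivalent to correcting $t$ asymmetric $\ell$-limited-magnitude errors in $\bphi(\bx1)$; hence it suffices to exhibit, over the alphabet $\Sigma_p$, a code $\cC_p$ that corrects $t$ symmetric errors and then invoke the construction of Lemma~\ref{lem.code.1}.

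First I would instantiate Lemma~\ref{lem:prime_bch} with designed distance $d = 2t+1$, which yields a narrow-sense BCH code $\cC_p$ over $\Sigma_p$ of length $p^m-1$ and redundancy $\lceil 2t(1-1/p)\rceil m$, valid whenever $2t+1 \le p^{\lceil m/2\rceil-1}$. A code of minimum distance $2t+1$ corrects $t$ symmetric (Hamming) errors, so $\cC_p$ is exactly the object required by Lemma~\ref{lem.code.1}. Since $\bphi(\bx1)$ has length $n_r = n_r(\bx1) \le n+1$, and since the error model preserves the number of runs (so that $n_r$ is known at the decoder from the received word), I would take $m$ minimal with $p^m-1 \ge n+1$ and shorten $\cC_p$ to the relevant length; shortening preserves the minimum distance and hence the $t$-error-correcting capability. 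For constant $t$ and growing $n$ the index $m$ grows like $\log_p n$, so the condition $2t+1 \le p^{\lceil m/2\rceil-1}$ holds for all sufficiently large $n$.

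Next I would apply Lemma~\ref{lem.code.1} with this $\cC_p$. The one point requiring care is that Lemma~\ref{lem.code.1} (and the underlying Theorem~4 of \cite{cassuto2010codes}) is phrased for the modulus $q = \ell+1$, whereas here we reduce modulo the prime $p \ge \ell+1$. The argument nonetheless goes through because the error vector $\be = \bphi(\bx1) - \bphi(\bx'1)$ has entries $e_i \in \{0,1,\dotsc,\ell\}$ with at most $t$ of them nonzero; since $0 \le e_i \le \ell < p$, reduction modulo $p$ does not wrap around, so $\be \bmod p = \be$ componentwise and $\be \bmod p$ again has Hamming weight at most $t$ over $\Sigma_p$. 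Consequently $\bphi(\bx'1) \bmod p = (\bphi(\bx1)\bmod p) - (\be \bmod p)$ differs from the codeword $\bphi(\bx1)\bmod p \in \cC_p$ in at most $t$ positions, and decoding $\cC_p$ recovers both the error locations and the exact magnitudes $e_i$ (exact precisely because $e_i < p$). Adding these magnitudes back to $\bphi(\bx'1)$ reconstructs $\bphi(\bx1)$, and since $\bphi$ is a bijection we recover $\bx$; the identity $\wt(\bphi(\bx1)) = n+1-n_r$ holds automatically for every $\bx \in \Sigma_2^n$ and merely certifies that the reconstructed vector corresponds to a length-$n$ binary string.

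The main obstacle is precisely this passage from $q = \ell+1$ to a general prime $p \ge \ell+1$: one must verify that enlarging the modulus neither destroys the injectivity of the magnitude read-out (guaranteed by $e_i \le \ell < p$) nor inflates the Hamming weight of the reduced error pattern (guaranteed by the same bound), so that a $t$-symmetric-error-correcting code over $\Sigma_p$ still suffices. A secondary, more routine, technical issue is matching the BCH length $p^m-1$ to the length $n_r$ via shortening and confirming the distance condition of Lemma~\ref{lem:prime_bch} for the chosen $m$; both are dispatched by the standard shortening argument and by taking $n$ (hence $m$) sufficiently large.
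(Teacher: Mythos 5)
Your proposal is correct and follows essentially the same route as the paper: reduce via $\bphi$ to $t$ asymmetric $\ell$-limited-magnitude errors, reduce modulo $p$, decode the $t$-symmetric-error-correcting BCH code of Lemma~\ref{lem:prime_bch} to find $\bz^{*}=\bphi(\bx1)\bmod p$, and add back the error $\bepsilon=(\bz^{*}-\bz')\bmod p$, which equals the true error because each entry lies in $\{0,\dotsc,\ell\}$ with $\ell<p$. You actually spell out two points the paper's proof leaves implicit --- that the reduced error pattern has Hamming weight at most $t$ and suffers no modular wraparound, and that the BCH code must be shortened to the run-length $n_r$ --- which strengthens rather than changes the argument.
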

\begin{proof}

Let $\bx\in\cC_{t,\ell}$ be a codeword, and $\bx'\in D_{t,\ell}(\bx)$ be the output through the channel with $t$ sticky-deletions with $\ell$-limited-magnitude. Let $\bz'=\bphi(\bx'1) \bmod p$, where $p$ is the prime such that $p\ge \ell+1$. Apply the decoding algorithm of $\cC_p$ on $\bz'$ and output $\bz^{*}$. Thus, $\bz^{*}$ is also a linear code in $\cC_{p}$ and it can be shown that $\bz^{*}=\bphi(\bx1) \bmod p$. Denote $\bepsilon=(\bz^{*}-\bz') \bmod p$, we can have
\begin{equation*}
    (\bphi(\bx1)-\bphi(\bx'1)) \bmod p=(\bz^{*}-\bz') \bmod p=\bepsilon.
\end{equation*}
Hence, the output is $\bphi(\bx1)=\bphi(\bx'1)+\bepsilon$ and then recover $\bx$ from $\bphi(\bx1)$.
\end{proof}

The detailed decoding steps are shown in Algorithm~\ref{alg:decalg1}.

\begin{algorithm}[h]\label{alg:decalg1}
\SetAlgoLined
\KwInput{Retrieved Sequence $\bx'\in D_{t,\ell}(\bx)$}
\KwOutput{Decoded sequence $\bx\in\cC_{t,\ell}$.}

\textbf{Initialization:}  Let $p$ be the smallest prime larger than $\ell+1$. Also, append $1$ at the end of $\bx'$ and get $\bphi(\bx'1)$. 

\textbf{Step 1:} $\bz'=\bphi(\bx'1) \bmod p$. Run the decoding algorithm of $\cC_{p}$ on $\bz'$ to get the output $\bz^{*}$.

\textbf{Step 2:} $\bepsilon=(\bz^{*}-\bz') \bmod p$ and $\bphi(\bx1)=\bphi(\bx'1)+\bepsilon$.

\textbf{Step 3:} Output $\bx1=\bphi^{-1}(\bphi(\bx1))$ and then $\bx$.

 \caption{Decoding Algorithm of $\cC_{t,\ell}$}
 \end{algorithm}

\begin{example}
Suppose $\bx=0100111001$ and $\bx'=D_{3,2}(\bx)=010101$, where $\ell\leq 2, t=3$. Since the retrieved sequence $\bx'=010101$, then $\bphi(\bx'1)=000001$ and $\bz'=\bphi(\bx')\bmod 3=000001$, where $p$ is smallest prime such that $p\ge l+1=3$.

Run the decoding algorithm of $\cC_{p}$ on $\bz'\in\cC_{p}$, we have the output sequence $\bz^{*}=001211$ and hence $\bepsilon=(\bz^{*}-\bz)\bmod 3=001210$. Thus, the output of the decoding algorithm $\bphi(\bx1)=\bphi(\bx'1)+\bepsilon=000001+001210=001211$ and $\bx=0100111001$.
\end{example}




Next, we will present a lower bound of the size of $\cC_{t,\ell}$. The proof is given in Appendix~\ref{app:sizenonsystem}.
\begin{restatable}{theorem}{sizenonsystem}\label{thm:sizenonsystem}
	The size of the code $\cC_{t,\ell}$ in Theorem~\ref{cor:bch} is bounded by the following, where $p$ is the smallest prime larger than $\ell+1$.
    \begin{align*}
        |\cC_{t,\ell}|&\ge \frac{2^n}{p(n+1)^{2t(1-1/p)}}\\ &\ge \frac{2^n}{(2\ell+2)(n+1)^{t(2\ell+1)/(\ell+1)}}.
    \end{align*}
\end{restatable}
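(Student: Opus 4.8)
The plan is to exploit the bijectivity of $\bphi$ to turn the size estimate into a counting problem in the image space $\Phi$, and then apply a pigeonhole argument against the BCH syndrome of Lemma~\ref{lem:prime_bch}. Since $\bphi$ is one-to-one, $|\cC_{t,\ell}|$ equals the number of $\bw\in\Phi$ with $\bw\bmod p\in\cC_p$. First I would partition $\Phi$ according to the length $n_r$ of $\bw$ (equivalently, the run-count $n_r(\bx1)$): by Lemma~\ref{lem:size_phix} the number of length-$n_r$ elements of $\Phi$ is $\binom{n}{n_r-1}$, and $\sum_{n_r=1}^{n+1}\binom{n}{n_r-1}=2^n$. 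This decomposition is what will let the final bound collapse to a clean multiple of $2^n$.

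For each fixed $n_r$ I would take the narrow-sense BCH code of Lemma~\ref{lem:prime_bch} with designed distance $d=2t+1$, shortened to length $n_r$; shortening preserves both the error-correcting capability and the redundancy bound, so this code has at most $p^{n-k}$ distinct syndromes with $n-k=\lceil 2t(1-1/p)\rceil m$ and $p^m$ the smallest power of $p$ exceeding $n_r$. Composing $\bw\mapsto(\bw\bmod p)$ with the syndrome map partitions the $\binom{n}{n_r-1}$ length-$n_r$ elements of $\Phi$ into at most $p^{n-k}$ classes, so the densest class contains at least $\binom{n}{n_r-1}/p^{n-k}$ of them; since every coset of a linear code corrects the same number of symmetric errors, selecting this densest coset yields a code of the asserted type and size. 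Bounding $p^{m}$ in terms of $n+1$ uniformly over $n_r\le n+1$ turns $p^{n-k}$ into the denominator $p\,(n+1)^{2t(1-1/p)}$; summing the per-length bounds over $n_r$ and using $\sum_{n_r}\binom{n}{n_r-1}=2^n$ then gives the first inequality $|\cC_{t,\ell}|\ge 2^n/\big(p(n+1)^{2t(1-1/p)}\big)$.

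The second inequality I would obtain purely by replacing $p$ with an explicit function of $\ell$. By Bertrand's postulate the smallest prime exceeding $\ell+1$ satisfies $p\le 2\ell+2$, which both bounds the constant factor $p\le 2\ell+2$ and, since $p\le 2(\ell+1)$ is equivalent to $2/p\ge 1/(\ell+1)$, gives $2(1-1/p)=2-2/p\le 2-\tfrac{1}{\ell+1}=\tfrac{2\ell+1}{\ell+1}$. Hence $2t(1-1/p)\le t(2\ell+1)/(\ell+1)$, and because $n+1\ge 1$ raising the base to the larger exponent only enlarges the denominator, yielding the weaker but cleaner bound.

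The main obstacle is the bookkeeping in the middle step: reconciling the integer-valued BCH redundancy $\lceil 2t(1-1/p)\rceil m$ with the ceiling-free exponent $2t(1-1/p)$ in the target, while $n_r$ ranges over all of $[1,n+1]$ and the BCH length $p^m-1$ must be shortened to the non-primitive value $n_r$. Controlling $p^m$ against $n+1$ across this whole range, and absorbing the gap between $\lceil 2t(1-1/p)\rceil$ and $2t(1-1/p)$ into the single extra factor of $p$ in the denominator, is the delicate part of the estimate; the remaining manipulations are routine.
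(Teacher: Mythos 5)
Your proposal follows essentially the same route as the paper's proof: decompose the image of $\bphi$ by the run-count $n_r$ via Lemma~\ref{lem:size_phix}, divide each class of size $\binom{n}{n_r-1}$ by the number of syndrome classes of the BCH code of Lemma~\ref{lem:prime_bch} with $d=2t+1$, sum over $n_r$ to recover $2^n$ in the numerator, and invoke Bertrand's postulate ($\ell+1\le p<2\ell+2$, hence $2(1-1/p)\le(2\ell+1)/(\ell+1)$) for the second inequality. One structural difference: your pigeonhole selects, for each $n_r$, a densest coset of $\cC_p$, which proves that \emph{some} coset-defined code attains the stated size, whereas the theorem is phrased for the fixed code of Theorem~\ref{cor:bch}. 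The paper instead asserts that exactly $\binom{n}{n_r-1}/p^{n_r-k}$ run-count-$n_r$ sequences land in $\cC_p$ itself --- an equidistribution claim it does not justify --- so neither argument is airtight on this point, and yours is arguably the more defensible averaging statement, at the cost of proving an existence variant of the theorem.

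The step in your plan that would fail as described is the treatment of the ceiling. The BCH redundancy is $n-k=\lceil 2t(1-1/p)\rceil m$, so the denominator produced by the argument is $p\,(n+1)^{\lceil 2t(1-1/p)\rceil}$, and the discrepancy with the target is a factor $(n+1)^{\lceil 2t(1-1/p)\rceil-2t(1-1/p)}$. This exponent is the fractional defect of $2t(1-1/p)$ and can be strictly positive (e.g., $t=1$, $\ell=2$, $p=3$ gives $2t(1-1/p)=4/3$ and a defect of $2/3$, i.e., an extra factor of $(n+1)^{2/3}$), so it cannot be absorbed into the single constant factor $p$ as you propose. To be fair, the paper's own proof has exactly the same weakness: in \eqref{eq:p_nk} it writes $p^{n+1-k}=p^{2t(1-1/p)\log_p(n+1)+1}$, silently dropping the ceiling. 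As written, both arguments actually establish the bound with exponent $\lceil 2t(1-1/p)\rceil$ in place of $2t(1-1/p)$; whenever these differ, the cleaner stated bound does not follow from either derivation.
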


\begin{corollary}\label{cor:non_codesize}
    There exists a code $\cC_{t,\ell}$ capable of correcting $t$-sticky-deletions with $\ell$-limited-magnitude with  redundancy at most 
    $t(2\ell+1)/(\ell+1)\log (n+1)+O(1)$ bits.
\end{corollary}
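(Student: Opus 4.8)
The plan is to derive Corollary~\ref{cor:non_codesize} directly from the size bound established in Theorem~\ref{thm:sizenonsystem}, since the redundancy is by definition $\log(2^n/|\cC_{t,\ell}|)$. First I would invoke the existence of the code $\cC_{t,\ell}$ from Theorem~\ref{cor:bch}, which guarantees that this construction corrects $t$ sticky-deletions with $\ell$-limited-magnitude. The only remaining task is to translate the lower bound on $|\cC_{t,\ell}|$ into an upper bound on the redundancy.

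The key computation proceeds as follows. Taking the second (weaker) inequality from Theorem~\ref{thm:sizenonsystem}, namely
\begin{equation*}
    |\cC_{t,\ell}|\ge \frac{2^n}{(2\ell+2)(n+1)^{t(2\ell+1)/(\ell+1)}},
\end{equation*}
I would substitute into the redundancy formula to obtain
\begin{equation*}
    \log\frac{2^n}{|\cC_{t,\ell}|}\le \log\Bigl((2\ell+2)(n+1)^{t(2\ell+1)/(\ell+1)}\Bigr).
\end{equation*}
Expanding the logarithm gives $\frac{t(2\ell+1)}{\ell+1}\log(n+1)+\log(2\ell+2)$. Since $\ell$ is treated as a constant throughout the paper, the term $\log(2\ell+2)$ is absorbed into $O(1)$, yielding the claimed bound $t(2\ell+1)/(\ell+1)\log(n+1)+O(1)$.

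There is essentially no hard part here, as this corollary is a one-line consequence of the already-proven size bound; the genuine difficulty lies in Theorem~\ref{thm:sizenonsystem}, whose proof is deferred to the appendix. The only point requiring mild care is ensuring the exponent $t(2\ell+1)/(\ell+1)$ matches the bound $2t(1-1/p)$ with $p<2(\ell+1)$: using the Bertrand--Chebyshev estimate $\ell+1\le p<2(\ell+1)$ from the proof sketch of Theorem~\ref{thm:sizenonsystem}, we have $2t(1-1/p)\le 2t(1-1/(2\ell+2))=t(2\ell+1)/(\ell+1)$, so the stated redundancy is a valid upper bound. Hence the corollary follows immediately, and I would simply state that the redundancy is obtained by taking the logarithm of the reciprocal of the size bound and absorbing the constant factor $2\ell+2$ into $O(1)$.
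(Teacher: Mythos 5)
Your proposal is correct and follows essentially the same route as the paper, which presents this corollary as an immediate consequence of the size bound in Theorem~\ref{thm:sizenonsystem}: take $\log(2^n/|\cC_{t,\ell}|)$, expand, and absorb $\log(2\ell+2)$ into $O(1)$. Your additional check that $2t(1-1/p)\le t(2\ell+1)/(\ell+1)$ via the Bertrand--Chebyshev bound $p<2(\ell+1)$ is exactly the step already carried out in the appendix proof of that theorem, so nothing is missing.
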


From the above Corollary~\ref{cor:non_codesize}, we can easily notice the redundancy of this code is $\frac{3}{2}\log (n+1)+O(1)$ when $t=1,\ell=1$. Therefore, we consider the special case $t=1$ and aim to construct some codes with lower redundancy. 

In the case $t=1$ and $\ell=1$, there is only a single sticky deletion, and we can construct a code as follows.
Denote $\cC_{H}$ be the $(2^{m}-1,2^{m}-m-1,3)$-Hamming code capable of correcting single error. We take $\cC_{H}$ as $\cC_{q}$ in Construction~\ref{con:cons1} and let $n=2^m-1$. Therefore, the size of code $\cC_{1,1}$ is bounded by
$|\cC_{1,1}|\ge \frac{2^{n-1}}{ n+1}$.

Furthermore, in the case $t=1$ and $\ell$ is a given integer, we now construct a code correcting a sticky deletion of magnitude at most $\ell$.
Let $q=\ell+1$ and $p$ be the smallest prime that $p > n_r.$
For any integers $a \in [0,p-1]$ and $b\in[0,\ell]$, let the code 
$\cC_{q,a,b}$ be
\begin{align*}
\cC_{q,a,b}=\{ \bu = (u_1,\ldots,u_{n_r}) \in \Sigma_q^{n_r}:& \sum_{i=1}^{n_r} iu_i \equiv a \bmod p \\
& \sum_{i=1}^{n_r} u_i \equiv b \bmod (\ell+1)\}.
\end{align*}

We show that $\cC_{q,a,b}$ is a code correcting a single $\ell$-limited magnitude error. The proof is given in Appendix~\ref{app:codet1}.
\begin{restatable}{lemma}{codet}\label{lem:codet1}
    The code $\cC_{q,a,b}$ constructed above can correct a single $\ell$-limited magnitude error.
\end{restatable}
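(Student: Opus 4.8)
The plan is to exhibit a decoder and show that the two parity checks defining $\cC_{q,a,b}$ jointly pin down both the location and the magnitude of a single error, so the error can be inverted. Let $\bu=(u_1,\ldots,u_{n_r})\in\cC_{q,a,b}$ be the transmitted word and suppose a single $\ell$-limited-magnitude error (a decrease in one coordinate, as produced by a sticky deletion) yields a received word $\bv$ with $v_j=u_j-e$ for some position $j\in[n_r]$ and magnitude $e\in\{0,1,\ldots,\ell\}$, and $v_i=u_i$ for all $i\neq j$ (here $e=0$ is the no-error case). From $\bv$ the decoder forms the two syndromes
\[
s_1=\left(a-\sum_{i=1}^{n_r} i\, v_i\right)\bmod p,\qquad s_2=\left(b-\sum_{i=1}^{n_r} v_i\right)\bmod(\ell+1).
\]
First I would read off the magnitude: since $\sum_i u_i-\sum_i v_i=e$ and $\sum_i u_i\equiv b\pmod{\ell+1}$, we get $s_2\equiv e\pmod{\ell+1}$. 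Because $0\le e\le \ell<\ell+1$, the residue $s_2$ \emph{equals} $e$ exactly; in particular $s_2=0$ certifies that no error occurred, and otherwise $s_2$ returns the true magnitude $e\in\{1,\ldots,\ell\}$.

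Next I would recover the position. Since only coordinate $j$ changed, $\sum_i i\,u_i-\sum_i i\,v_i=j\,e$, and using $\sum_i i\,u_i\equiv a\pmod p$ this gives $s_1\equiv j\,e\pmod p$. Now $e$ is invertible modulo $p$: the construction takes $p$ to be the smallest prime exceeding $n_r$, so for the relevant block lengths $1\le e\le \ell<n_r<p$, whence $p\nmid e$. Hence the decoder computes $j\equiv s_1\,e^{-1}\pmod p$, and because the true position satisfies $1\le j\le n_r<p$, this residue determines $j$ uniquely. Having identified $(j,e)$, the decoder restores $u_j=v_j+e$ and leaves every other coordinate unchanged, recovering $\bu$.

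Equivalently, this is a uniqueness statement that I would phrase directly: no two distinct patterns $(j,e)$ and $(j',e')$ with $e,e'\in\{0,1,\ldots,\ell\}$ can produce the same pair $(s_1,s_2)$, since $s_2$ forces $e=e'$ and then $s_1$ together with the invertibility of $e$ and $n_r<p$ forces $j=j'$. The only genuinely delicate points are the two number-theoretic conditions underpinning the argument: $p>n_r$ guarantees that distinct positions give distinct residues $j\,e\bmod p$ for fixed $e$, so the weighted check separates locations; and $p>\ell$ guarantees that $e$ is a unit modulo $p$, so the weighted syndrome may be divided by the magnitude. I expect the main obstacle to be stating these conditions cleanly and confirming they hold simultaneously — which they do because $p$ is chosen as the smallest prime larger than $n_r$ and $\ell$ is a constant, so $\ell<n_r<p$ for all relevant lengths.
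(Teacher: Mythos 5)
Your proposal is correct and follows essentially the same route as the paper: the $\bmod\,(\ell+1)$ sum check pins down the error magnitude exactly (since $0\le e\le\ell$), and the weighted $\bmod\,p$ check then locates the error, with uniqueness coming from the primality of $p$ and the fact that both the magnitude and the difference of positions are less than $p$. The only cosmetic difference is that you phrase the location step as an explicit division $j\equiv s_1 e^{-1}\pmod p$, whereas the paper argues uniqueness of the index by contradiction from $(i_2-i_1)t\equiv 0\pmod p$; you also make explicit the recovery of $e$ from the second syndrome, which the paper leaves implicit.
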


For any distinguish pair $(a_1,b_1) \neq (a_2,b_2)$, two codes, $\cC_{q,a_1,b_1}$ and $\cC_{q,a_2,b_2}$, are also distinguish, that is, 
\begin{equation}\label{eq.code.1}
    \cC_{q,a_1,b_1} \cap \cC_{q,a_2,b_2} = \emptyset, \;
\cup_{a=0}^{p-1} \cup_{b=0}^{\ell} \cC_{q,a,b} = \Sigma_q^{n_r}.
\end{equation}

For each code $\cC_{q,a,b}$, we define the following code
\begin{equation*}
    \cC_{t,\ell,a,b}=\{\bx \in \Sigma_2^n: \bphi(\bx1) \bmod q \in \cC_{q,a,b}\}
\end{equation*}
From \eqref{eq.code.1}, we obtain 
\begin{equation}\label{eq.code.3}
    \cC_{t,\ell,a_1,b_1} \cap \cC_{t,\ell,a_2,b_2} = \emptyset,
\end{equation}
for any pair $(a_1,b_1) \neq (a_2,b_2),$
and
\begin{equation}\label{eq.code.4}
    \cup_{a=0}^{p-1} \cup_{b=0}^{\ell} \cC_{t,l,a,b} =\Sigma_2^n.
\end{equation}
From~\eqref{eq.code.3} and \eqref{eq.code.4}, there exists integers $a,b$ such that the code $\cC_{t,\ell,a,b}$ has size at most $\frac{2^n}{p(\ell+1)}.$

Similar to the argument in Lemma \ref{lem.code.1}, we can show that the code $\cC_{t,\ell,a,b}$ can correct a sticky-deletion of magnitude at most $\ell.$ And thus, we obtain the following result.
\begin{theorem}
    There exists a code correcting a single sticky-deletion of magnitude at most $\ell$ with at most $\log n +\log (2\ell+2)$ bits of redundancy.
\end{theorem}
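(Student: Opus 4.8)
The plan is to obtain the code by an averaging (pigeonhole) argument over the partition already established in \eqref{eq.code.3}--\eqref{eq.code.4}, and to certify its error-correcting capability through the chain of equivalences in Propositions~\ref{clm:asticky} and~\ref{clm:error_type} together with Lemma~\ref{lem:codet1}. First I would fix the prime once and for all: since $n_r=n_r(\bx1)\le n+1$ for every $\bx\in\Sigma_2^n$, take $p$ to be the smallest prime strictly larger than $n+1$. This single $p$ is admissible in the defining congruences of $\cC_{q,a,b}$ simultaneously for all relevant lengths, because the uniqueness argument of Lemma~\ref{lem:codet1} only requires $i_2-i_1<p$ and $t\le\ell<p$, and these hold since $i_1,i_2\in[1,n_r]$ with $n_r<p$ and $q=\ell+1\le p$.

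Next I would invoke the partition. By \eqref{eq.code.3} and \eqref{eq.code.4}, the $p(\ell+1)$ codes $\{\cC_{t,\ell,a,b}\}_{a\in[0,p-1],\,b\in[0,\ell]}$ are pairwise disjoint and cover $\Sigma_2^n$, so their sizes sum to $2^n$. By averaging there exists a pair $(a^\ast,b^\ast)$ with $|\cC_{t,\ell,a^\ast,b^\ast}|\ge 2^n/(p(\ell+1))$; I take $\cC\defeq\cC_{t,\ell,a^\ast,b^\ast}$ as the claimed code. (Note this is the \emph{largest} class, giving the needed lower bound on size and hence an upper bound on redundancy.)

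I would then establish the correction property, mirroring Lemma~\ref{lem.code.1}. A single sticky-deletion of magnitude at most $\ell$ applied to $\bx$ shortens exactly one run, so by Proposition~\ref{clm:asticky} exactly one coordinate of $\bphi(\bx1)$ decreases by some $\delta\le\ell$. Reducing modulo $q=\ell+1$ and using $\delta\le\ell=q-1$, this is precisely a single $\ell$-limited-magnitude error on the $\Sigma_q$-word $\bphi(\bx1)\bmod q$ (Proposition~\ref{clm:error_type}). By Lemma~\ref{lem:codet1}, the checksum $\sum_i i\,u_i\bmod p$ locates the unique affected index $i_0$ and $\sum_i u_i\bmod(\ell+1)$ recovers $\delta$; adding $\delta$ back to coordinate $i_0$ of the received $\bphi(\bx'1)$ reconstructs $\bphi(\bx1)$ and hence $\bx$. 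Finally, the redundancy of $\cC$ is $\log(2^n/|\cC|)\le\log\bigl(p(\ell+1)\bigr)$, and Bertrand's postulate gives $p<2(n+1)$, so the redundancy is at most $\log(n+1)+\log(2\ell+2)$, which is at most $\log n+\log(2\ell+2)$ up to the negligible gap between $\log(n+1)$ and $\log n$.

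The step I expect to demand the most care is the correction argument: one must verify that a run-length decrease of magnitude $\delta$ in $\bphi(\bx1)$ maps to a genuine single $\ell$-limited-magnitude error on $\bphi(\bx1)\bmod q$, and that back-substituting the recovered $(i_0,\delta)$ restores the true, un-reduced run lengths. This hinges on the residue arithmetic being lossless, which holds exactly because $\delta\le\ell=q-1$, so no magnitude information is lost in the reduction modulo $q$.
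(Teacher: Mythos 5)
Your proposal is correct and follows essentially the same route as the paper: the pigeonhole argument over the partition \eqref{eq.code.3}--\eqref{eq.code.4} to extract the largest class $\cC_{t,\ell,a^\ast,b^\ast}$ (you rightly read the paper's ``size at most'' as a typo for ``at least''), the correction property via Propositions~\ref{clm:asticky} and~\ref{clm:error_type} together with Lemma~\ref{lem:codet1}, and Bertrand's postulate for the redundancy bound. Your only deviation is fixing a single prime $p>n+1$ rather than the paper's $p>n_r$, which is a harmless (and arguably cleaner) way to make the construction uniform over all run counts.
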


    

\section{Systematic Code Construction}\label{sec:system}
In the previous section, we propose a non-systematic code $\cC_{t,
\ell}$ for correcting $t$-sticky deletions with $\ell$-limited-magnitude. In this section, we will provide the efficient encoding and decoding function based on the code $\cC_{t,\ell}$ presented in Theorem~\ref{cor:bch}.

\subsection{Efficient Encoding}

Before providing the efficient systematic encoding algorithm, we now introduce a useful lemma proposed in \cite{knuth1986efficient} for encoding balanced sequences efficiently. The balanced sequence denotes the binary sequence with an equal number of $0$s and $1$s, which will be used for distinguishing the boundary of redundancy.
\begin{lemma}(cf. \cite{knuth1986efficient})
    Given the input $\bx\in\Sigma_2^k$, let the function $\bs':\Sigma_2^k\rightarrow \Sigma_2^n$ such that $\bs'(\bx)\in\Sigma_2^n$ is a balanced sequence, where $n=k+\log k$.
\end{lemma}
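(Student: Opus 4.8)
The plan is to construct the balancing map $\bs'$ explicitly via Knuth's complementation trick and then verify that its output is always a balanced word of length $n=k+\log k$. First I would introduce, for each index $i\in\{0,1,\dots,k\}$, the word $\bx^{(i)}$ obtained from $\bx$ by complementing its first $i$ coordinates, and set $w(i)=\wt(\bx^{(i)})$. Two elementary observations drive the argument: complementing one additional bit changes the number of ones by exactly $\pm 1$, so $|w(i+1)-w(i)|=1$ for every $i$; and complementing \emph{all} $k$ bits reflects the weight, giving $w(k)=k-w(0)$.

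Next I would invoke a discrete intermediate-value argument. Since $w(0)+w(k)=k$, the two endpoints cannot both exceed $k/2$ nor both fall below it, so one of them is at most $k/2$ and the other at least $k/2$. Because the integer sequence $w(0),w(1),\dots,w(k)$ moves in unit steps, it must attain the value $k/2$ at some index $i^\ast$ (taking $k$ even without loss of generality; an odd $k$ is handled by padding a single bit). For that index the complemented word $\bx^{(i^\ast)}$ has exactly $k/2$ ones and is therefore balanced. I would then define $\bs'(\bx)=\bp(i^\ast)\,\bx^{(i^\ast)}$, where $\bp(i^\ast)$ is a fixed-length balanced encoding of the chosen flip index $i^\ast$ occupying $\log k$ bits. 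A concatenation of two balanced words is balanced and the total length is $\log k + k = n$, which certifies that $\bs'(\bx)$ is balanced.

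It remains to check invertibility, so that the original data can be recovered. Since $\bp$ has fixed length $\log k$, it is self-delimiting: the decoder reads $\bp$ to recover $i^\ast$, strips it off, and re-complements the first $i^\ast$ bits of the suffix $\bx^{(i^\ast)}$ to restore $\bx$. Thus $\bs'$ is injective and the map is well defined in both directions.

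The main obstacle will be the counting for the prefix. The number of balanced words of length $m$ is $\binom{m}{m/2}\sim 2^{m}/\sqrt{\pi m/2}$, so with $m=\log k$ one obtains only about $k/\sqrt{\log k}$ balanced prefixes---just short of the $k+1$ flip indices that must be named. I would resolve this exactly as Knuth does: either enlarge the prefix to length $\log k+\tfrac12\log\log k+O(1)$ so that enough balanced codewords exist, or use a two-level scheme in which the prefix is itself balanced by a much shorter sub-prefix. In the asymptotic redundancy accounting of this paper the discrepancy is absorbed into the lower-order $O(\log\log n)$ term, so recording the prefix length as $\log k$ is harmless for the overall bookkeeping; the fully honest statement carries an additive $O(\log\log k)$ correction that I would flag explicitly rather than hide.
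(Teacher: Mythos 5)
Correct, and essentially the paper's own route: the paper offers no proof of this lemma at all, deferring entirely to \cite{knuth1986efficient}, and your prefix-complementation walk with the discrete intermediate-value argument (together with a balanced prefix encoding the flip index) is exactly Knuth's construction that the citation points to. Your closing caveat is also accurate and worth keeping --- since $\binom{\log k}{(\log k)/2} \approx k/\sqrt{(\pi/2)\log k} < k+1$, a purely balanced prefix of exactly $\log k$ bits cannot name every flip index, so the honest redundancy is $\log k + O(\log\log k)$, which is precisely the slack the paper silently absorbs into the $O(\log\log N)$ term of Theorem~\ref{thm:totalred}.
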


\begin{definition}
    Given the input $\bx\in\Sigma_2^k$, define the function $\bs:\Sigma_2^k\rightarrow \Sigma_2^{n'}$ such that $\bs(\bx)\in\Sigma_2^{n'}$ whose first bit is $1$ and $\bs(\bx)_{[2,n']}$ is balanced sequence with $(n'-1)/2$ $0$s and $(n'-1)/2$ $1$s, where $n'=k+\log k+1$.
\end{definition}

Besides, the following lemma is used for correcting $r$ 0-deletions in a binary sequence. 
\begin{lemma}(cf. \cite{dolecek2010repetition})\label{lem:lara}
For any $\bx\in\Sigma_2^k$, there exists a labeling function $f_{0}: \Sigma_2^{k}\rightarrow\Sigma_2^{n-k}$ such that the code $(\bx,f_{0}(\bx))$ is capable of correcting $r$ $0$-deletions, where $n-k=r\log n$ when $r$ is constant.
\end{lemma}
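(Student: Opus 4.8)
The plan is to exploit the key structural feature of $0$-deletions: since only zeros are deleted, every $1$ survives, so the $1$'s act as fixed anchors and \emph{no synchronization is lost}. Write a length-$n$ codeword $\bc=(\bx,f_0(\bx))$ with $w$ ones as $0^{g_0}1\,0^{g_1}1\cdots 1\,0^{g_w}$ and associate to it the gap vector $\bg=(g_0,\dots,g_w)$. A single $0$-deletion decrements exactly one coordinate $g_i$ by $1$ without changing $w$ or the index $i$; hence after at most $r$ $0$-deletions the received word has gap vector $\bg'$ with $g_i'\le g_i$, the \emph{same} length-$(w+1)$ indexing, and $\sum_i(g_i-g_i')=d$, where the number of deletions $d=n-(\text{received length})\le r$ is directly observable. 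Thus correcting $r$ $0$-deletions is exactly the problem of recovering a nonnegative integer error vector $\be=\bg-\bg'$ with $\sum_i e_i=d\le r$ from $\bg'$ together with a short piece of side information.

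First I would build the underlying (non-systematic) code from power-sum syndromes. Fix the smallest prime $P>n$ and, for $j=1,\dots,r$, define $s_j(\bc)=\sum_{i=0}^{w} i^{j} g_i \bmod P$. Given the received $\bg'$ one computes $s_j(\bg')$ and forms the error syndromes $\sum_i i^{j} e_i \equiv s_j-s_j(\bg')\pmod P$ for $j=1,\dots,r$, while $\sum_i e_i=d$ is already known. The vector $\be$ encodes a multiset $M$ of at most $r$ indices from $\{0,\dots,n-1\}$ (index $i$ with multiplicity $e_i$); because $P>n$ these indices are distinct modulo $P$, and because $P>r$ the divisions arising in Newton's identities are valid over $\bbF_P$. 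Hence the power sums $p_0=d,p_1,\dots,p_r$ determine the elementary symmetric functions of $M$, and therefore $M$ itself, uniquely. This recovers $\be$, then $\bg=\bg'+\be$, and finally $\bx$. The side information is the $r$ residues $s_1,\dots,s_r$, costing $r\lceil\log P\rceil=r\log n+O(1)$ bits, which matches the claimed $n-k=r\log n$ for constant $r$.

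It remains to make the construction systematic, i.e.\ to realize those syndromes inside the appended block $f_0(\bx)$ rather than as external side information. I would fix a target syndrome $\ba=(a_1,\dots,a_r)\in\bbF_P^{r}$ and take the whole codebook to be a single syndrome class, namely all $\bc$ with $s_j(\bc)=a_j$ for every $j$; since one class corrects $r$ $0$-deletions by the argument above, it suffices to show that for \emph{every} data prefix $\bx\in\Sigma_2^k$ there is at least one suffix $f_0(\bx)\in\Sigma_2^{n-k}$ placing $(\bx,f_0(\bx))$ in this class, and then to define $f_0$ by choosing any such suffix.

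The main obstacle is exactly this surjectivity-of-completion step, since the parameters sit at the packing boundary: there are $P^r\approx n^r=2^{\,n-k}$ classes and $2^{\,n-k}$ candidate suffixes, so a purely probabilistic averaging over $\ba$ yields only one completion per prefix on average and cannot guarantee coverage of \emph{all} prefixes. I would therefore argue it explicitly: within the $n-k=r\log n$ redundancy positions, reserve a small set of ``control'' positions whose modification moves $(s_1,\dots,s_r)$ through $r$ residues spanning $\bbF_P^{r}$, so that the attainable syndromes cover every target $\ba$. The delicate point I expect to handle with care is that flipping a redundancy bit does not act linearly on $\bg$ (it can split or merge a gap and shift subsequent indices), so the control positions and the order in which they are set must be arranged so that their net effect on $(s_1,\dots,s_r)$ is triangular and invertible modulo $P$; establishing this invertibility is the crux, after which every prefix is completable and $f_0$ is well defined.
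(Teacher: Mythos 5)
Note first that the paper does not prove this lemma at all: it is imported verbatim from \cite{dolecek2010repetition}, so there is no in-paper proof to match. Judged on its own merits, the first half of your proposal is sound and is essentially the standard route behind the cited result: in the gap-vector domain the $1$s are anchors, $0$-deletions only decrement gap values without disturbing the indexing, the number of deletions $d$ is read off the received length, and the power-sum syndromes modulo a prime $P>n$ determine the error multiset via Newton's identities (your phrase ``these indices are distinct modulo $P$'' is slightly off for a multiset, but the argument survives: $P>n$ prevents collisions between distinct gap indices, $P>r$ makes Newton's identities invertible, and the degree-$d$ monic polynomial recovers multiplicities, including deletions in gap $0$ via the known $p_0=d$). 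This gives a valid non-systematic code with side information of $r\lceil\log P\rceil=r\log n+O(1)$ bits.

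The genuine gap is in your systematization step, and it is worse than the ``delicate point'' you flag. Your plan is to fix one syndrome class $\ba\in\bbF_P^r$ and show every prefix $\bx$ admits a completing suffix, via control positions ``spanning $\bbF_P^r$.'' That spanning claim is impossible by counting: the suffix offers only $2^{n-k}=n^r$ choices, while $P^r>n^r$ because $P>n$, so the attainable syndrome set for any fixed prefix necessarily misses part of $\bbF_P^r$; nothing guarantees the one needed value $\ba-(\text{prefix contribution})$ is attainable for \emph{every} prefix, and with $2^k\gg P^r$ prefixes the needed values sweep far more targets than the reachable set covers. Moreover, the triangular-invertibility claim you defer is not a technicality one can expect to rescue: flipping a bit in the suffix splits or merges gaps and shifts the indices of \emph{all} subsequent gaps, so it perturbs an unbounded number of weighted terms $i^j g_i$ at once; the syndrome map on suffixes is not close to linear, and you give no mechanism making its image hit a prescribed point. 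The known resolution --- used both in the prefixing method of \cite{dolecek2010repetition} and, at a higher level, in this paper's own Section~\ref{sec:system} --- is to abandon completion into a fixed syndrome class entirely: compute the syndrome of the message part and \emph{append its value explicitly}, encoding the appendix so that $0$-deletions inside it are locally correctable (e.g., repetition and balanced-sequence structure to locate the boundary), then decode the appendix first and apply the syndrome decoder to the message part. Replacing your completion argument with such a layered appendix closes the gap and still meets the $r\log n+O(1)$ redundancy for constant $r$.
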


Next, we define the mapping function from non-binary to binary.
\begin{definition}
Given the input $\bx\in\Sigma_2^k$, define the function  $\bb:\Sigma_p^k\rightarrow \Sigma_2^{n}$ such that $\bb(\bu)_{[i\cdot\lceil\log p\rceil+1,(i+1)\cdot\lceil\log p\rceil]}$ is the binary form of $u_i$, where $n=k\cdot\lceil\log p\rceil $.    
\end{definition}

Given the parameters $t$ and $\ell$ in the $t$-sticky deletions with $\ell$-limited-magnitude channel, let $p$ be the smallest prime larger than $\ell+1$ and $\cC_{p}$ in Lemma~\ref{lem:prime_bch} be the $p$-ary primitive narrow-sense $[n,k,2t+1]$-BCH codes.
\begin{definition}
Define the labeling function  as $g:\Sigma_{p}^k\rightarrow \Sigma_{p}^{n-k}$ such that $(\bx,g(\bx))$ is a $p$-ary primitive narrow-sense BCH $[n,k,2t+1]$-codes, where $n=k+\lceil2t(1-1/p)\rceil m$ and $n=p^m-1$.
\end{definition}

Then, we will begin to introduce the specific encoding procedure. Suppose the input sequence is $\bc\in\Sigma_2^k$, and we have $\bphi(\bc1)$ with length $r_{c}=n_r(\bc1)$. Then, let $\bc'=\bphi(\bc1)\bmod p\in \Sigma_{p}^{r_{c}}$ and append $\mathbf{0}^{k+1-r_c}$ at the end of $\bc'$. Hence, we denote $\bar{\bc}\in\Sigma_p^{k+1}=(\bc',\mathbf{0}^{k+1-r_c})$. 

Next, encode $\bar{\bc}$ via the labeling function $g$ and output the redundancy part $g(\bar{\bc})$. Then, we will map the redundancy part $g(\bar{\bc})$ into binary sequence $\bb(g(\bar{\bc}))$ and make $\bb(g(\bar{\bc}))$ to the balanced sequence $h_1(\bar{\bc})=\bs(\bb(g(\bar{\bc})))$.

Further, we need to protect the redundancy part $h_1(\bar{\bc})$. The idea is to apply the code in Lemma~\ref{lem:lara} on $h_1(\bar{\bc})$ and output $f_0(h_1(\bar{\bc}))$. Also, make $f_0(h_1(\bar{\bc}))$ to balanced sequence $\bs(f_0(h_1(\bar{\bc})))$ and repeat its each bit $2t\ell+1$ times. Let
$h_2(\bar{\bc})=\mathrm{Rep}_{2t\ell+1}\bs((f_0(h_1(\bar{\bc}))))$, where $\mathrm{Rep}_{k}(\bx)$ is the $k$-fold repetition of $\bx$.

Finally, we have the output $\mathrm{Enc}(\bc)=(\bc,\bpsi^{-1}( h(\bc)))$, where $h(\bc)=(h_1(\bar{\bc}),h_2(\bar{\bc}))$.  The detailed encoding steps are summarized in the following Algorithm~\ref{alg:enc}.

\begin{algorithm}[h]\label{alg:enc}
\SetAlgoLined
\KwInput{$\bc\in \Sigma_2^k$}
\KwOutput{Encoded sequence $\mathrm{Enc}(\bc)\in\Sigma_2^N$}

\textbf{Initialization:}  Let $p$ be the smallest prime larger than $\ell+1$. 

\textbf{Step 1:} Append $1$ at the end of $\bc$ and get $\bphi(\bc1)$ with length $r_c=n_r(\bc1)$.

\textbf{Step 2:} $\bc'=\bphi(\bc1)\bmod p\in\Sigma_p^{r_c}$. Append $\mathbf{0}^{k+1-r_c}$ at the end of $\bc'$, then $\bar{\bc}=(\bc',\mathbf{0}^{k+1-r_c})$.

\textbf{Step 3:} Encode $\bar{\bc}$ via $\cC_{p}$ and output $g(\bar{\bc})$. Mapping 
$g(\bar{\bc})$ to balanced binary sequence $h_1(\bar{\bc})=\bs(\bb(g(\bar{\bc})))$. 

\textbf{Step 4:} Protect $h_1(\bar{\bc})$ via $f_{0}$ and obtain the total redundancy $h(\bc)=(h_1(\bar{\bc}),h_2(\bar{\bc}))$.
    
\textbf{Step 5:} Output $\mathrm{Enc}(\bc)=(\bc,\bpsi^{-1}(h(\bc)))\in\Sigma_2^N$.
 \caption{Encoding Algorithm}
 \end{algorithm}




\begin{lemma}\label{lem:encodingdense}
Given a sequence $\bc\in\Sigma_2^{k}$, Algorithm~\ref{alg:enc} outputs an encoded sequence capable of correcting $t$ sticky-deletions with $\ell$-limited-magnitude $\mathrm{Enc}(\bc)\in\Sigma_2^N$.
\end{lemma}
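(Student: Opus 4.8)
The plan is to exhibit a decoder that inverts $\mathrm{Enc}$ under an arbitrary pattern of $t$ sticky-deletions with $\ell$-limited-magnitude, peeling off the nested redundancy blocks from the outermost (most heavily protected) inward and finishing with a BCH correction on the data. Throughout I would work in the derivative domain: by Proposition~\ref{clm:error_type} a sticky-deletion of magnitude $\delta$ on a run of the transmitted word is exactly the deletion of $\delta$ zeros from the corresponding $0$-block of $\bpsi(\mathrm{Enc}(\bc)1)$, so the channel becomes at most $t$ bursts of $0$-deletions of total weight at most $t\ell$. The first fact I would record is that a sticky-deletion never erases a whole run, so the number of runs --- equivalently the number and positions of the $1$'s in the derivative --- is preserved; in particular the corrupted data part $\bc''$ still satisfies $n_r(\bc''1)=n_r(\bc1)=r_c$, which is precisely what lets the final BCH step align coordinate-by-coordinate.

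Next I would decode the two redundancy blocks from the outside in. For $h_2(\bar{\bc})=\mathrm{Rep}_{2t\ell+1}\bs(f_0(h_1(\bar{\bc})))$, each symbol of $\bs(f_0(h_1(\bar{\bc})))$ is repeated $2t\ell+1$ times, so every run of the $h_2$-block has length a multiple of $2t\ell+1$, while at most $t\ell$ of its zeros are deleted; since $t\ell<(2t\ell+1)/2$, dividing each received run length by $2t\ell+1$ and rounding returns the original run lengths, recovering $\bs(f_0(h_1(\bar{\bc})))$ and hence $f_0(h_1(\bar{\bc}))$ exactly. Feeding the recovered $f_0(h_1(\bar{\bc}))$ together with the (at most $t\ell$-deletion-corrupted) copy of $h_1(\bar{\bc})$ into the decoder of Lemma~\ref{lem:lara}, with its parameter set to $r=t\ell$, recovers $h_1(\bar{\bc})=\bs(\bb(g(\bar{\bc})))$ and therefore the BCH parity $g(\bar{\bc})$. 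Stripping the two redundancy blocks off the received word then leaves exactly the corrupted data $\bc''$.

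It remains to correct the data and recover $\bc$ exactly. I would form $\bar{\bz}=(\bphi(\bc''1)\bmod p,\mathbf{0}^{k+1-r_c})$ and append the recovered parity $g(\bar{\bc})$. Because the sticky-deletions altered at most $t$ runs of the data, $\bar{\bz}$ differs from $\bar{\bc}$ in at most $t$ coordinates, so $(\bar{\bz},g(\bar{\bc}))$ lies within Hamming distance $t$ of the codeword $(\bar{\bc},g(\bar{\bc}))$ of the distance-$(2t+1)$ BCH code $\cC_p$; its decoder returns $\bar{\bc}=\bphi(\bc1)\bmod p$. Finally, exactly as in the proof of Theorem~\ref{cor:bch}, since each deletion magnitude is at most $\ell<p$, the residue $\bepsilon=(\bphi(\bc1)-\bphi(\bc''1))\bmod p$ equals the true magnitude vector, so $\bphi(\bc1)=\bphi(\bc''1)+\bepsilon$, and inverting the bijection $\bphi$ returns $\bc$.

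The step I expect to be the genuine obstacle is the synchronization/parsing argument implicit in the second paragraph: showing that the decoder can unambiguously cut the received (shortened) word into the data part and the two redundancy blocks even though every boundary has been shifted by the deletions. This is exactly what the balanced sequences produced by $\bs$ are for, and the argument must combine (i) the invariance of the number of $1$'s under $0$-deletions, (ii) the fixed, a priori known $1$-count of each balanced block, and (iii) the leading-$1$ separators, to guarantee a unique parse. Care is also needed to verify that the repetition factor $2t\ell+1$ and the choice $r=t\ell$ in $f_0$ are large enough to absorb the worst case in which all $t\ell$ deletions land inside a single redundancy block.
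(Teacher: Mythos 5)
Your proposal is correct and follows essentially the same route as the paper, which justifies this lemma not by a displayed proof but by its decoding procedure (Algorithm~\ref{alg:dec2} and the surrounding discussion): reduce to $0$-deletions in the derivative domain, locate the balanced redundancy blocks by counting $1$'s from the end of $\bpsi(\bd)$, peel off the $(2t\ell+1)$-fold repetition block and then the $f_0$-protected block, and finish with the $p$-ary BCH correction exactly as in Theorem~\ref{cor:bch}. The synchronization step you flag as the main obstacle is resolved in the paper precisely as you sketch, via the invariance of the $1$-count under $0$-deletions combined with the a priori known $1$-counts of the balanced blocks produced by $\bs$.
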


Therefore, the total redundancy of the code  $h(\bc)=(h_1(\bar{\bc}),h_2(\bar{\bc}))$ via this encoding process can be shown as follows. The proof is given in Appendix~\ref{app:totalred}.
\begin{restatable}{theorem}{totalred}\label{thm:totalred}
The total redundancy of the code $\mathrm{Enc}(\bc)\in\Sigma_2^N$ by given input $\bc\in\Sigma_2^k$ is
   \begin{multline*}
    N-k= \frac{\lceil2t(1-1/p)\rceil\cdot\lceil\log p\rceil}{\log p} \log(N+1)\\+O(\log\log N).
\end{multline*} 
where $p$ is smallest prime such that $p\ge \ell+1$. 
\end{restatable}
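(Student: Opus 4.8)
The plan is to decompose the total redundancy as $N-k = |h_1(\bar{\bc})| + |h_2(\bar{\bc})|$. This is valid because $\mathrm{Enc}(\bc) = (\bc, \bpsi^{-1}(h(\bc)))$ with $h(\bc)=(h_1(\bar{\bc}),h_2(\bar{\bc}))$, and $\bpsi^{-1}$ is a length-preserving bijection on binary strings, so $N = k + |h_1(\bar{\bc})| + |h_2(\bar{\bc})|$. I would then track the length contributed at each stage of Algorithm~\ref{alg:enc}, showing that $h_1$ carries the dominant term while $h_2$ is only a lower-order correction because it protects an already-short string.

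First I would compute $|h_1(\bar{\bc})|$. By the definition of the labeling function $g$ (via Lemma~\ref{lem:prime_bch}), the BCH redundancy $g(\bar{\bc})$ consists of $\lceil 2t(1-1/p)\rceil\, m$ symbols over $\Sigma_p$, where $n=p^m-1$ is the BCH codeword length and $m=\log_p(n+1)$. Mapping each symbol to $\lceil\log p\rceil$ bits via $\bb$ multiplies the length by $\lceil\log p\rceil$, producing a binary string $\bb(g(\bar{\bc}))$ of length $L := \lceil 2t(1-1/p)\rceil\,\lceil\log p\rceil\,m$, and balancing it with $\bs$ adds $\log L + 1$ bits, so $|h_1(\bar{\bc})| = L + \log L + 1$. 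Substituting $m = \log(n+1)/\log p$ gives the leading term $L = \frac{\lceil 2t(1-1/p)\rceil\,\lceil\log p\rceil}{\log p}\log(n+1)$; since $p$ is a constant, $L=\Theta(\log n)$ and hence $\log L + 1 = O(\log\log n)$.

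Next I would relate the BCH length $n$ to the final length $N$. The message fed to the BCH code has length $\Theta(k)$, and all redundancy appended afterward is $o(k)$, so $N = k + O(\log N)$; choosing $m$ minimally gives $n \in [\,\Theta(k),\, p\cdot\Theta(k)\,)$, whence $n=\Theta(N)$ and $\log(n+1) = \log(N+1) + O(1)$. The constant coefficient then converts the leading term of $|h_1|$ into $\frac{\lceil 2t(1-1/p)\rceil\,\lceil\log p\rceil}{\log p}\log(N+1)$, with the $O(1)$ absorbed into the error term.

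The remaining point, and the one I expect to require the most care, is bounding $|h_2(\bar{\bc})|$. The key observation is that $h_1(\bar{\bc})$ has length only $O(\log N)$, so applying the deletion-correcting labeling $f_0$ of Lemma~\ref{lem:lara} to $h_1$ yields a string of length $r\cdot\log\big(O(\log N)\big) = O(\log\log N)$, where $r=O(1)$ is the number of $0$-deletions to be corrected. Balancing this via $\bs$ keeps it $O(\log\log N)$, and the fixed $(2t\ell+1)$-fold repetition multiplies by a constant, so $|h_2(\bar{\bc})| = O(\log\log N)$. The subtlety to verify is precisely that the $\log$ inside the $f_0$-length bound of Lemma~\ref{lem:lara} is taken with respect to the short length of $h_1$ rather than the global length $N$; this is what demotes the protection overhead from $O(\log N)$ to $O(\log\log N)$. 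Combining the two contributions gives $N-k = \frac{\lceil 2t(1-1/p)\rceil\,\lceil\log p\rceil}{\log p}\log(N+1) + O(\log\log N)$, as claimed.
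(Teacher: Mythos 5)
Your proposal is correct and follows essentially the same route as the paper's proof: decompose $N-k=|h_1(\bar{\bc})|+|h_2(\bar{\bc})|$, read off the dominant term $\lceil 2t(1-1/p)\rceil\,\lceil\log p\rceil\, m$ from the BCH redundancy after binary expansion, and observe that the balancing and $f_0$-protection layers contribute only $O(\log\log N)$ because they act on a string of length $O(\log N)$. The one place you are actually more careful than the paper is in justifying $\log(n+1)=\log(N+1)+O(1)$ via $n=\Theta(N)$, whereas the paper simply sets $m=\log_p(N+1)$ outright; this is a welcome refinement, not a different argument.
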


\subsection{Decoding Algorithm}

Without loss of generality, suppose the encoded sequence $\mathrm{Enc}(\bc)\in\Sigma_2^N$ is transmitted through the $t$ sticky deletions with $\ell$-limited-magnitude channel, and we have the retrieved sequence $\bd\in\Sigma_2^{N-t\ell}$. In this subsection, we will introduce the decoding algorithm for obtaining $\mathrm{Dec}(\bd)\in\Sigma_2^{k}$ by given  $\bd\in\Sigma_2^{N-t\ell}$. We will introduce the explicit decoding procedure as follows.

First, we get $\bpsi(\bd)$, which is the derivative of $\bd$, but we need to distinguish where the redundancy part begins. Since $t$ sticky deletions with $\ell$-limited magnitude occurred in $\mathrm{Enc}(\bc)$ is equivalent to deleting at most $t\ell$ $0$s in $\bpsi(\mathrm{Enc}(\bc))$, the number of $1$s in $\bpsi(\bd)$ is the same with that of in $\bpsi(\mathrm{Enc}(\bc))$. Thus, we can count the number of $1$s from the end of $\bpsi(\bd)$ to find the beginning of the redundancy since the redundancy part is the balanced sequence.

Hence, we find the $(n_2+2t\ell+1)/2$-th $1$ and $(n_1/2+n_2/2+t\ell+1)$-th $1$ from the end of $\bpsi(\bd)$ and denote their entries as $i_{r2}$ and $i_{r1}$, respectively. For the subsequence $\bpsi(\bd)_{[i_{r2},N-t\ell]}$, since there are at most $t\ell$ $0$s are deleted in $\mathrm{Enc}(\bc)_{[N-n_2+1,N]}$, the $(2t\ell+1)$-fold repetition code can help recover $\bs(f_0(h_1(\bar{\bc})))$. Further, we can obtain parity bits $f_0(h_1(\bar{\bc}))$.

Next, for the subsequence $\bpsi(\bd)_{[i_{r1},i_{r2}-1]}$, there are also at most $t\ell$ $0$s are deleted in $\mathrm{Enc}(\bc)_{[N-n_1-n_2+1,N-n_2]}$. Thus, the code introduced in Lemma~\ref{lem:lara} can help to protect the redundancy part of $\mathrm{Enc}(\bc)_{[N-n_1-n_2+1,N-n_2]}$. Then, $h_1(\bar{\bc})$ can be recovered with the help of parity bits $f_0(h_1(\bar{\bc}))$. Further, we can get the $g(\bar{\bc})$ from $h_1(\bar{\bc})=\bs(\bb(g(\bar{\bc})))$. 

Finally, denote $\bz=(\bphi(\bd_{[1,i_{r1}-1]},1),\mathbf{0}^{k+1-r_c})\in\Sigma^{k+1}$ and $\bz'=\bz \bmod p$, where $r_c$ is the length of $\bphi(\bd_{[1,i_{r1}-1]},1)$ and $k=N-n_1-n_2$. Then, the following decoding steps are the same as Algorithm~\ref{alg:decalg1} where $\bz'$ is the input of Step 1 of Algorithm~\ref{alg:decalg1}. The only difference is we need to first remove $\mathbf{0}^{k+1-r_c}$ at the end before the last step of $\bphi^{-1}$. Therefore, the main steps for decoding $\bd\in\Sigma_2^{N-t\ell}$ is summerized in Algorithm~\ref{alg:dec2}.

\begin{algorithm}[h]\label{alg:dec2}
\SetAlgoLined
\KwInput{$\bd\in \Sigma_2^{N-t\ell}$}
\KwOutput{Decoded sequence $\mathrm{Dec}(\bd)\in\Sigma_2^k$}

\textbf{Initialization:}  Let $p$ be the smallest prime larger than $\ell+1$. 

\textbf{Step 1:} Get $\bpsi(\bd)$. Find the $(n_2+2t\ell+1)/2$-th $1$ and $(n_1/2+n_2/2+t\ell+1)$-th $1$ from the end of $\bpsi(\bd)$ and denote their entries as $i_{r2}$ and $i_{r1}$, respectively.

\textbf{Step 2:} Recover $\bs(f_0(h_1(\bar{\bc})))$ from $\bpsi(\bd)_{[i_{r2},N-t\ell]}$ and then get $f_0(h_1(\bar{\bc}))$.

\textbf{Step 3:} Recover $h_1(\bar{\bc})$ via $f_0(h_1(\bar{\bc}))$ and then obtain $h_1(\bar{\bc})$.

\textbf{Step 4:} Denote $\bz'=(\bphi(\bd_{[1,i_{r1}-1]},1),\mathbf{0}^{k+1-r_c})\bmod p$. Input $\bz'$ to Step 1 of Algorithm~\ref{alg:decalg1} and run the remaining steps of Algorithm~\ref{alg:decalg1}.
    
\textbf{Step 5:} Output $\mathrm{Dec}(\bd)$.

\caption{Decoding Algorithm}
\end{algorithm}

\subsection{Time Complexity}

For the encoding algorithm, given  constants $t,\ell$, each codeword is generated by following steps:
\begin{itemize}
\item First, given an input binary message string $\bc$ and output $\bar{\bc}$. The time complexity is $O(n)$.
\item Second, encode $\bar{\bc}$ via $\cC_{p}$. The time complexity of the $p$-ary narrow-sense $[n,k,2t+1]$-BCH code $\cC_{p}$ is $O(tn\log n)$.
\item Third, map the labeling function of $\bar{\bc}$ to balanced binary sequence with the time complexity $O(\log n)$.
\item Forth, protect the redundancy part $h_1(\bar{\bc})$ via the code in Lemma~\ref{lem:lara} and make it to the balanced sequence and the repetition code with the time complexity $O(\log n)$.
\end{itemize}
Therefore, the time complexity of the encoder time complexity is dominated by the $p$-ary narrow-sense BCH code, which is $O(tn\log n)$.

For the decoding algorithm, we can easily show that the time complexity is dominated by the decoding of the $p$-ary narrow-sense BCH code and decoding for the code in Lemma~\ref{lem:lara}. Different from the encoding procedure, the decoding for the code in Lemma~\ref{lem:lara} is brute-force, hence the time complexity is $O((n''_{2})^{t\ell})=O((\log n)^{t\ell})$. Therefore, the total time complexity of decoding is $O(tn+(\log n)^{t\ell})$.

\section{conclusion}\label{sec:conclu}
In this paper, we presented codes for correcting $t$-sticky deletions with $\ell$-limited-magnitude. We first presented a non-systematic code for this type of error and analyze its size. We then developed systematic codes and proposed efficient encoding and decoding algorithms. However, there still remain some interesting problems, including extending this work to a larger number of deletions, not only constant $t$.



\bibliographystyle{IEEEtran}
\bibliography{reference}

\begin{appendices}

\section{Proof of Lemma~\ref{lem:size_phix}}\label{app:size_phix}
\sizephi*
\begin{proof}
    For a binary sequence $\bx\in\Sigma_2^n$, the corresponding sequence $\bphi(\bx1)$ is with length $n_r=n_r(\bx1)$ and $\wt(\bphi(\bx1))=n+1-n_r$. Also, the cardinality of $\Phi$ can be considered the number of ways of arranging $n+1-n_r$ indistinguishable objects in $n_r$ distinguishable boxes. Thus, we can get the cardinality of $\Phi$ as shown in Lemma~\ref{lem:size_phix}. 

On the other side, since the mapping function $\bphi$ is a one-to-one mapping function, the cardinality of $\Phi$ should be the same as $|\Sigma_2^n|=2^n$. 
\end{proof}

\section{Proof of Theorem~\ref{thm:sizenonsystem}}\label{app:sizenonsystem}
\sizenonsystem*

\begin{proof}
    Denote $\bz=\bphi(\bx1)\bmod p$. $\bphi(\bx1)$ can be written as $\bphi(\bx1)\rightarrow(\bz,\ba)$ such that $\bphi(\bx1)=\bz+p\cdot \ba$, where $\ba$ is a vector with the same length as $\bphi(\bx1)$ and $\bz$. 
    Further, since $\bz\in\cC_{p}$ and $\cC_{p}$ is a linear code, the code $\cC_{p}$ with length $n_r$ can be considered as a set which is obtained by $\Sigma_{p}^{n_r}$ partitioned into $p^{n_r-k}$ classes.
    
    Denote $\bphi(\bx1)^{n_r}$ as the $\bphi(\bx1)$ with length $n_r$. Thus, for any fixed number of runs $n_r$, the cardinality of $\bphi(\bx1)^{n_r}$ such that $\bphi(\bx1)^{n_r} \bmod p \in\cC_{p}$ with length $n_r$ is:
    \begin{equation*}
    \left|\bphi(\bx1)^{n_r}\right|=\frac{{n \choose n_r-1}}{p^{n_r-k}}.
    \end{equation*}
    Then, the size of the code $\cC_{t,\ell}$ in Theorem~\ref{cor:bch} can be shown as:
    
    \begin{align}\label{eq:size_c}
|\cC_{t,\ell}|=\sum_{n_r=1}^{n+1}\left|\bphi(\bx1)^{n_r}\right|&=\sum_{n_r=1}^{n+1}\left[\frac{{n \choose n_r-1}}{p^{n_r-k}}\right]\nonumber\\
&\ge \frac{\sum_{n_r=1}^{n+1}{n \choose n_r-1}}{p^{n+1-k}}=\frac{2^n}{p^{n+1-k}}.
\end{align} 
From Lemma~\ref{lem:prime_bch} and Theorem~\ref{cor:bch}, let $d=2t+1$ and $m=\log_p(n+1)$.
\begin{align}\label{eq:p_nk}
    p^{n+1-k}&=p^{2t(1-1/p)\cdot \log_{p}(n+1)+1}=p(n+1)^{2t(1-1/p)}\nonumber\\ 
    &\leq (2\ell+2)(n+1)^{t(2\ell+1)/(\ell+1)}.
\end{align}
where the last inequality from the fact that $\ell+1\leq p<2(\ell+1)$ based on the well-known Bertrand–-Chebyshev theorem.

Therefore, from \eqref{eq:size_c} and \eqref{eq:p_nk}, the size of the code $\cC_{t,\ell}$ in Theorem~\ref{cor:bch} is bounded by
    \begin{equation*}
        |\cC_{t,\ell}|\ge \frac{2^n}{(2\ell+2)(n+1)^{t(2\ell+1)/(\ell+1)}}.\qedhere
    \end{equation*}
\end{proof}

\section{Proof of Lemma~\ref{lem:codet1}}\label{app:codet1}
\codet*
\begin{proof}
Let $\bu =(u_1,\ldots,u_{n_r})$ be an original code word and $\bv =(v_1,\ldots,v_{n_r})$ be a word obtained from $\bu$ with at most a single $\ell$-limited magnitude error. That is, $\sum_{i=1}^{n_r} u_i - \sum_{i=1}^{n_r} v_i = t \leq \ell$. Hence, if $t \neq 0,$ there is an index $i_0$ such that $u_{i_0}-v_{i_0}=t.$ And thus, $\sum_{i=1}^{n_r} v_i + i_0 t =\sum_{i=1}^{n_r} u_i \equiv a \bmod p.$
We now show that the index $i_0$ that satisfies the above condition is unique. Assume that there are two indices $i_1$ and $i_2$ such that  $\sum_{i=1}^{n_r} v_i + i_1 t = \sum_{i=1}^{n_r} v_i + i_2 t \equiv a \bmod p$. Then, $(i_2-i_1)t \equiv 0 \bmod p.$ It is not possible since $p$ is a prime and both $(i_2-i_1), t < p.$
Therefore, we can determine the index $i_0$ uniquely and recover the original word $\bu.$ So, the code $\cC_{q,a,b}$ can correct a single $\ell$-limited magnitude error.
\end{proof}

\section{Proof of Theorem~\ref{thm:totalred}}\label{app:totalred}
\totalred*
\begin{proof}
Let $m=\log_p(N+1)$, hence $N=p^m-1$.  The lengths of the redundancy parts are as follows:
\begin{itemize}
    \item $n''_1$ is the length of $g(\bar{\bc})$: $n''_1=\lceil2t(1-1/p)\rceil m$;
    \item $n'_1$ is the length of $\bb(g(\bar{\bc}))$: $n'_1=n''_1\cdot\lceil\log p\rceil$;
    \item $n_1$ is the length of $\bs(\bb(g(\bar{\bc})))$: $n_1=n'_1+\log n'_1+1$;
    \item $n''_2$ is the length of $f_0(h_1(\bar{\bc}))$: $n''_2=t\ell\log n_1$;
    \item $n'_2$ is the length of $\bs(f_0(h_1(\bar{\bc})))$: $n'_2=n''_2+\log n''_2+1$;
    \item $n_2$ is the length of $h_2(\bar{\bc})$: $n_2=(2t\ell+1)n'_2$;
    
\end{itemize}
    Based on the above statement, we can see that $N-k=n_1+n_2$, where
    \begin{equation*}
        n'_1=(\lceil2t(1-1/p)\rceil m)\cdot\lceil\log p\rceil
    \end{equation*}
with $m=\log_{p}(N+1)$. Hence, we have
\begin{equation*}
        n'_1=\frac{\lceil2t(1-1/p)\rceil\cdot\lceil\log p\rceil}{\log p} \log(N+1)
    \end{equation*}
Since both $t$ and $p$ are constants, then $\log n'_1=O(\log\log N)$ and $n_2=O(\log\log N)$.
Therefore, the total redundancy of the code $\mathrm{Enc}(\bc)\in\Sigma_2^N$ given the input $\bc\in\Sigma_2^k$ can be shown as the Theorem~\ref{thm:totalred}.
\end{proof}
\end{appendices}

\end{document}